%% file: main.tex
\documentclass[
aip,
jcp,
reprint, 
floatfix
]{revtex4-1}

\usepackage{graphicx}
\usepackage{dcolumn}
\usepackage{xcolor}
\usepackage{bm}
\usepackage{hyperref}
\usepackage{ifthen}
\usepackage{amsmath,amssymb,amsthm}
\usepackage{fancybox}
\usepackage{titlesec}
\usepackage{tikz}
\usepackage{subcaption} 
\usepackage[ruled, vlined]{algorithm2e}
\usepackage{placeins} 
\usetikzlibrary{arrows.meta, calc}

\usepackage[color=yellow!30]{todonotes}

\newtheorem{theorem}{Theorem}[section]

\newtheorem{proposition}[theorem]{Proposition}

\makeatletter
\def\@email#1#2{%
 \endgroup
 \patchcmd{\titleblock@produce}
  {\frontmatter@RRAPformat}
  {\frontmatter@RRAPformat{\produce@RRAP{*#1\href{mailto:#2}{#2}}}\frontmatter@RRAPformat}
  {}{}
}%
\makeatother

\begin{document}

\preprint{AIP/123-QED}

\title{Coupling between Phase Separation and Geometry on a Closed Elastic Curve:\\
Free Energy Minimization and Dynamics}

\author{Hanchun Wang}
\author{Ronojoy Adhikari}
\author{Michael E. Cates}
\email[Corresponding author: ]{m.e.cates@damtp.cam.ac.uk}
\affiliation{DAMTP, University of Cambridge, Wilberforce Road, Cambridge CB3 0WA, UK}
\date{\today}
\begin{abstract}
\noindent
We study the free energy and dynamics of a closed elastic filament (a one-dimensional curve in two dimensions) coupled to a scalar concentration field representing, for example, an absorbed species. The density variable has a tendency to phase-separate whereas the local spontaneous curvature is concentration-dependent. We address analytically and by simulation both the free energy landscape and the dynamics (the latter
comprising a coupled Willmore flow and Cahn--Hilliard gradient flow on the full differential geometry of a closed filament), addressing issues that previous work typically sidestepped by restricting to the Monge gauge. Specifically we find that the closure constraint for a deformable filament qualitatively changes the free energy landscape compared with either a rigid closed filament or an open elastic one, admitting metastable and stable states with more than one domain of each type. By numerical global free energy minimization we explore equilibrium morphologies across a wide range of model parameters. For selected parameter values we present fully dynamical results, tracking the time evolution of the various contributions to the free energy and confirming the emergence of both metastable and equilibrium multi-domain morphologies.

\end{abstract}

\maketitle

\section{INTRODUCTION}
In various soft matter and biological settings, phase separation arises within or on a deformable structure, whose shape coevolves with the local composition. The phase-separating species in such systems can both sense and impose local curvature, so that compositionally heterogeneous regions deform the substrate while the substrate curvature reciprocally biases composition. 
For example, lipid bilayers behave as elastic surfaces \cite{Canham1970,helfrich1973elastic,Seifert1997,dimova2006practical}, on which curvature-sensing and curvature-inducing proteins (such as BAR-domain assemblies, endocytic machinery) and lateral lipid demixing together generate compositional domains coupled to membrane curvature \cite{mcmahon2005membrane,Prevost2015,simunovic2016curvature, VeatchKeller2003,SimonsToomre2000,sorre2009curvature,roux2005role, GoychukFrey2019,tozzi2019out,le2021dynamic, bohinc2005self}; this coupling underlies cellular processes such as endocytosis, exocytosis, and protein- or cytoskeleton-driven membrane remodelling \cite{mcmahon2015membrane,salbreux2012actin,kaksonen2018mechanisms,jahn1999membrane,wu2014exocytosis}. 
Within the cytoplasm, biomolecular condensates, formed by liquid--liquid phase separation, have interfaces that need not be governed by surface tension alone: interfacial organization can endow condensate boundaries with additional surface mechanics, and stabilize nonspherical shapes \cite{Brangwynne2009,Hyman2014,Banani2017,folkmann2021regulation,law2023bending}. 
Similar couplings can arise with DNA and chromatin substrates: liquid condensates form along chromatin and exert capillary forces that mechanically restructure the polymer~\cite{shin2018liquid,larson2017liquid, keenen2021hp1}, whereas on plasmid DNA, ParB and its CTP-driven bridging interactions assemble a phase-separated condensate that compacts and bends the deformable substrate~\cite{brackley2013nonspecific,tivsma2023dynamic,zhao2024phase}.
This coupling (as seen below) can stabilize nontrivial stationary compositional patterns, in contrast to phase separation in a fixed spatial domain which generically coarsening toward states of minimal interface cost \cite{10.1063/1.1744102}.

In this paper, we study a closed planar elastic filament (or `one-dimensional membrane'\cite{mark2010physical}) carrying a conserved phase-separating field. Bending and stretching elasticity \cite{helfrich1973elastic,willmore1993riemannian} are coupled to Cahn--Hilliard phase separation \cite{10.1063/1.1744102} through a composition-dependent spontaneous curvature. We show that allowing the filament geometry to deform under a curvature-composition coupling changes the free energy landscape. In particular, we show that the closed-loop setting admits distinct classes of stable and metastable morphologies that are absent on a fixed periodic domain: configurations with multiple domains can persist as local minima. This is crucially different from an open filament with similar coupling, because for a closed loop interface displacements generically require global curvature readjustments to maintain closure. Thus, global topology profoundly affects phase-separation dynamics.

Mechanochemical coupling between composition and curvature has been widely modeled. A common effective description introduces a curvature-density feedback in which an adsorbate density (describing {\em e.g.}, protein enrichment) generates an effective spontaneous curvature, while curvature biases recruitment and transport \cite{markin1981lateral, GoychukFrey2019,tozzi2019out,le2021dynamic, kralj1996shapes}. Many such theories focus on pattern formation and shape instabilities on deformable interfaces, often adopting symmetry-reduced or graph-based geometries ({\em e.g.}, Monge representations) to obtain tractable evolution equations and numerics  \cite{Frey2023mechanochemical,PhysRevE.111.044405,elliott2021domain}. These are restricted to small deformations from a flat or weakly-curved reference state. In contrast, the present work operates directly in the full intrinsic geometry of the closed curve, with no assumption of small deformations. In parallel, elastic-curve mechanics and shape transitions have been studied in continuum settings \cite{nakayama1993motion,taloni2023general,brazda2023bifurcation}. However, these approaches typically do not isolate the role of global constraints specific to closed loops,
which can mediate nonlocal interactions between phase boundaries and restructure the free energy landscape. Notably, as we shall illustrate later ({\em e.g.}, Fig.~\ref{fig:closure_compare_combined}), periodic boundary conditions on the intrinsic curvature and composition fields are not sufficient to enforce geometric closure of a filament. More closely related to our work is the analysis of Kawakatsu et al.~\cite{kawakatsu1993}, who treated a similar planar closed loop geometry with curvature--composition coupling. These authors used a sharp domain-wall, $n$-fold-symmetric, static ansatz that yields only ground-state branches of the free energy corresponding to one or two phase-separated domains. Here we instead retain a full Cahn--Hilliard composition field with diffuse interfaces; in addition we allow for filament extensibility, impose no symmetry, and follow the full gradient-flow dynamics, finding metastable multi-domain morphologies that were previously unexplored.

The time evolution of (noiseless) phase-separation on an elastic manifold is governed by a gradient flow on a suitable free energy functional. (This is not true for active phase separation \cite{Cates_2025}, but here we address only the passive case.) Several works have developed numerical methodologies to address this type of problem. For one-dimensional elastic bodies, Cosserat rod models discretized by finite differences \cite{PyElastica}, and discrete elastic-band formulations \cite{Bergou2008discrete}, have been used. The Arbitrary Lagrangian-Eulerian (ALE) framework has also been applied to evolving interfaces coupled to additional fields \cite{membranes15040106}. Discrete differential geometry is used in the methodology of \cite{choi2024dismech} for both active and passive deformable objects. Geometrical methods involving moving frames also represent a powerful recent methodology~\cite{kikuchi2024rare,nemeth2024geometric}. Surface Cahn--Hilliard formulations and related curvature-driven geometric flows have been analyzed and discretized on moving interfaces using finite elements on a moving surface \cite{elliott2010surface, elliott2021domain}. 

Here, to address the full gradient flow problem without reduced equations, we develop a practical numerical framework for coupled Cahn-Hilliard and Willmore-flow-type\cite{willmore1993riemannian} (overdamped) elastic dynamics on a filament. This represents a 3-field, 4th order, nonlinear coupled PDE system. We first compute metastable equilibria by direct constrained energy minimization in a spectral representation, and then obtain dynamical pathways by integrating the coupled gradient flow PDEs. Using these tools, we map the resulting morphology diagrams across parameter space, identify stable and metastable branches, and relate their transition boundaries to the competition between interfacial cost in the composition field and `closure-induced curvature frustration'. 

Our results demonstrate in detail how such frustration qualitatively changes the free energy landscape from that of either a rigid closed domain or a flexible open one, admitting new metastable morphologies including several classes of multi-interface phase-separated states. These morphologies are separated by first-order morphological transitions and have a strong influence on dynamical evolution, as well as on the phase diagram.

Specifically, closure directly modulates domain merging by coupling the Cahn--Hilliard field to an additional field---the curvature field---which must satisfy the global double-integral closure constraints. In this setting, the standard Cahn-Hilliard model outcome, that coarsening ends in a single domain, is no longer trivially applicable: a one-domain composition profile generically cannot accommodate geometric closure.

We find three primary resolution routes, starting from a configuration comprising a single domain of each type, to comply with the closure constraint. (i) The system can eliminate interfaces, approaching a uniform phase and circular shape at the expense of additional stretching energy. (ii) It can retain a single domain structure but satisfy closure mainly by paying extra bending energy, through a globally frustrated curvature profile. (iii) It can instead introduce two or more extra interfaces, forming a multi-domain state whose added interfacial cost allows compliance with the closure constraint. 

Sharp-interface analysis and numerical simulations then show that, in the nearly sharp-interface regime, the two-domain (``peanut'' shape) morphology is the minimal-interface configuration capable of satisfying closure without paying additional bending energy beyond necessity, while additional interfaces could also accommodate further global constraints (e.g. prescribed enclosed area). In our setting, these mechanisms correspond to the branches analyzed later as $N=0,2,4, 6$ respectively (Fig. \ref{fig: metastables}). The resulting tradeoffs are controlled by the relative costs of stretching, bending, and interfacial energies; varying the parameters that set these costs yields the morphology diagram and phase boundaries reported in Sec. \ref{sec:results}, as confirmed by our simulations and energy comparisons.

The remainder of the paper is organized as follows: in Sec.~\ref{sec: background} we introduce the model, its kinematics, and the coupled gradient-flow dynamics; in Sec.~\ref{sec: analysis} we analyze the sharp-interface and near-inextensible limits to obtain analytical constraints on admissible morphologies; in Sec.~\ref{sec: numerical} we describe the numerical methods for constrained minimization and time-dependent simulations; in Sec. \ref{sec:results} we present the resulting morphology landscapes and phase diagrams. Additional technical details are provided in the Appendices.

\section{BACKGROUND \& MODEL} \label{sec: background}

\subsection{Physical setting and modeling assumptions}\label{sec: assumptions}

\begin{figure*}
    \centering
    \includegraphics[width=0.9\linewidth]{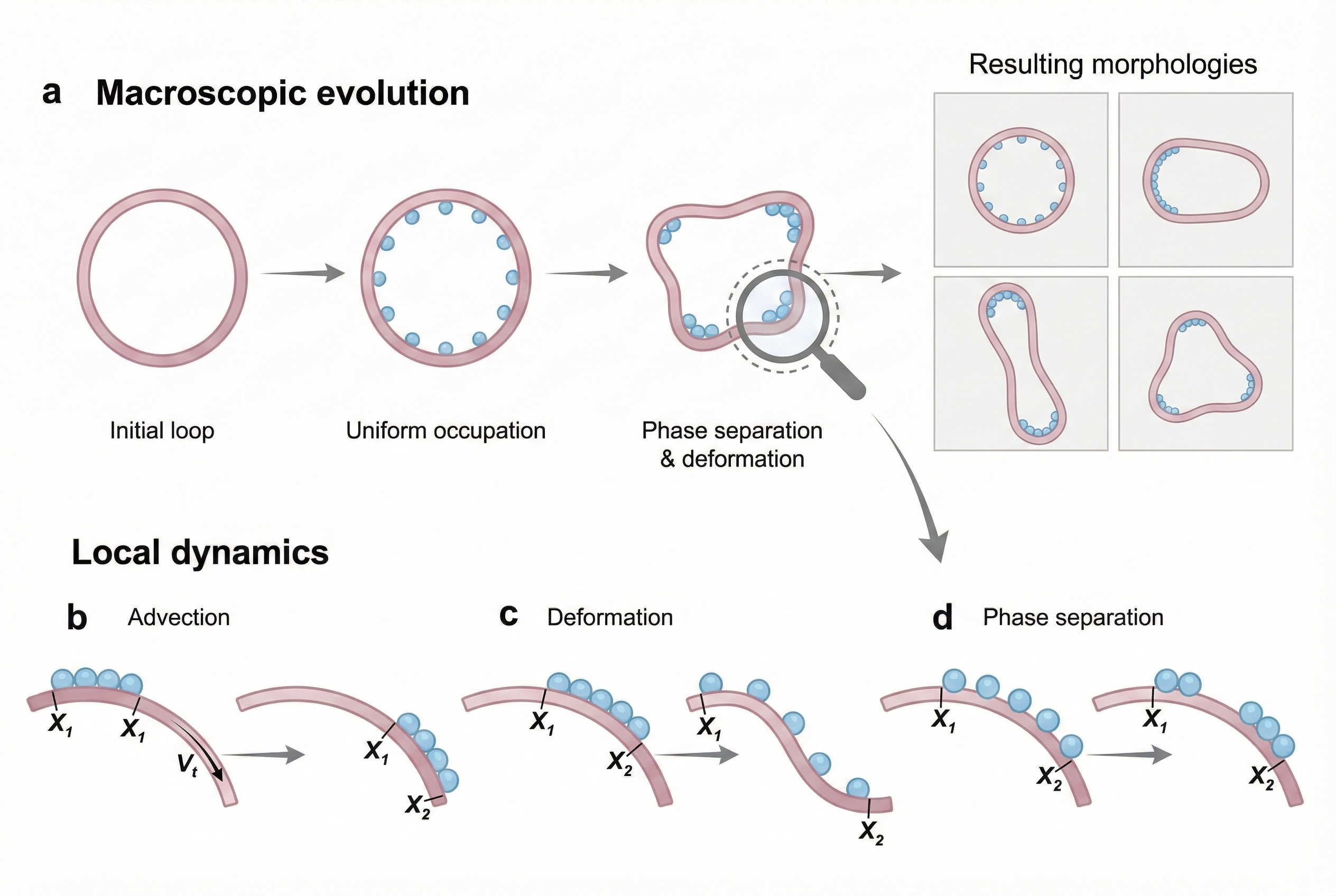}
    \caption{\textbf{Macroscopic shape evolution driven by coupled local dynamics.}
\textbf{a,} \emph{Adsorption:} Adsorbed particles that induce a spontaneous curvature locally bend the filament. Because of the dynamical coupling, such particles collectively drive the global evolution of the closed filament away from the circular shape with the lowest bending energy to different morphologies in which symmetry is broken as phase separation and loop deformation set in concurrently. The right panel shows representative final morphologies featuring different numbers of domains and distinct shapes.
\textbf{b--d,} Local dynamical mechanisms governing the evolution, illustrated between arc-length positions $X_1$ and $X_2$.
\textbf{b,} \emph{Advection:} particles are advected together with the tangential motion of filament at velocity $v_t$.
\textbf{c,} \emph{Deformation:} local particle concentration is changed by the bending and stretching of the curve.
\textbf{d,} \emph{Phase separation:} particles spontaneously partition into dense and dilute phases along the curve.
Feedback among these local processes drives the macroscopic morphological changes illustrated schematically in figure \textbf{a}.}
    \label{fig: 2}
\end{figure*}

Even for a closed elastic curve in the plane, there are many choices to be made in specifying precisely a continuum model that captures the physics of geometry-concentration coupling. Different choices, especially in the density (composition) sector, may be motivated by different microscopic settings in which the concentration field has a different interpretation. Here we are interested in identifying new phenomena rather than quantitative prediction, so we keep these choices as simple as possible, consistent with the physical considerations given in the introduction. 

We consider a closed elastic filament in $\mathbb{R}^2$ coupled to a scalar density field $c$ distributed along its contour; see Fig.~\ref{fig: 2}. The filament undergoes overdamped motion such that the local thermodynamic force density acting at each point on it is balanced by the local viscous drag. The filament is modeled as a one-dimensional elastic curve that can undergo extension, but with a strong enough energetic penalty to enforce near-inextensibility across much of the phase diagram. Its topology is fixed to be that of a simple loop, with turning number or rotation index \cite{do2016differential} of unity, so that the tangent vector rotates by $2\pi$ on circuiting the loop. 

The scalar field variable $c\in[0,1]$ represents a local particle occupation ratio relative to a stretch-dependent maximal capacity: local elongation increases the admissible number of particles, and $c=1$ is always close-packed. This makes sense, {\em e.g.,} for the adsorbed particles in Fig.~\ref{fig: 2} but a different choice might be appropriate in other types of system, such as when $c$ describes the relative proportions of two species in a binary mixture (albeit with any difference disappearing in the inextensible limit).
The particles of density $c$ are assumed to move coherently with any flow along the filament: they are fully
advected by the tangential velocity without slip, and also without exchange in the normal direction via adsorption/desorption. 

The interaction and diffusion of the concentration field involve two natural distance measures: the physical arc length $s$ along the deformed curve, and the material distance along the Lagrangian coordinate $\sigma$. The two coincide in the strictly inextensible limit but differ on a stretched filament. In our near-inextensible regime, we frame the physics in the true arc length $s$; the Lagrangian coordinate $\sigma$ is then used for numerics, where it provides a fixed computational domain without remeshing.\\
Specifically, we take $c(s, t)$ to obey the standard Cahn--Hilliard equation along $s$, with the same form as in an unstretched 1D domain (with no explicit stretch-dependent terms). This is the natural choice for adsorbed particles undergoing 1D liquid-vapour phase separation. An alternative would be to impose Cahn--Hilliard dynamics in the Lagrangian frame, for example a filament composed of two molecular species that swap places with their nearest neighbours at rates that depend on the character of those neighbours and on curvature only, so that these rates are stretch-independent.

More generally, various material parameters in our model are taken to be fully geometry-independent. Such parameters include both $M$, the collective mobility, and $\varepsilon^2$, the square gradient coefficient, of the $c$-field dynamics. (In our units, $\varepsilon$ is then the arc-length width of the interfaces between phase-separated domains.) Likewise the elastic constants governing stretching and curvature are geometry-independent, so that the free energy is quadratic in curvature and stretch (when expressed in physical arc-length coordinates). Another geometry-independent parameter is $\kappa_0$ which controls the energetic coupling between curvature and concentration, such that the local spontaneous curvature is $\kappa_0 c(s)$. Again, this is a specific choice, such that in the absence of adsorbed particles the filament prefers to be straight. However, since the integral of the curvature is fixed by topology (Eq.~\eqref{eq: turning} below), any other baseline for the spontaneous curvature at $c = 0$ would give similar results. The explicit role of all these parameters in the model equations is set out in Secs.~\ref{sec: fieldcurve}, \ref{sec: fieldenergy} below.

For simplicity, we consider only the limiting case of strong phase separation. This means that, in the absence of curvature, the dense (liquid) phase always has $c=1$ and the dilute (vapour) phase $c=0$. For a strictly inextensible filament these can both be rescaled to recover an arbitrary miscibility gap. However, in the extensible case, having $c=0$ in the vapour simplifies our treatment because its density is then dynamically unaffected by stretching.

The above choices are made without further discussion in Secs.~\ref{sec: fieldcurve}, \ref{sec: fieldenergy} where we construct the governing free energy and the dynamical equations for the filament with geometry-concentration coupling. Before doing this, we recall in Sec.~\ref{sec: kinematics}  the fundamental kinematics of an elastic filament from the standpoint of differential geometry.

\subsection{Geometry and kinematics of a closed filament}
\label{sec: kinematics}

\subsubsection{Parametrization and metric.}
Here we follow the formulation of~\cite{nakayama1993motion, gross2025elastic} to model the elastic filament. Let $\Gamma(t)\subset \mathbb{R}^2$ denote a closed material loop, parameterized by a material
label $\sigma\in S^1$ (see Fig.~\ref{fig:curve-mapping}). The spatial embedding is given by a smooth function
$
\mathbf{x}(\sigma,t) : S^1\times[0,T]\to\mathbb{R}^2.
$
The local stretching factor $h(\sigma,t)$ and the metric $g(\sigma,t)$ are
$
h(\sigma,t) = \left| \partial_\sigma \mathbf{x}(\sigma,t)\right|, g(\sigma,t) = h^2(\sigma,t).
$
Let $s$ denote the arc-length coordinate along $\Gamma(t)$. Then the infinitesimal arclength element along the curve is
$
ds = h\,d\sigma$. The total length of the loop is $L(t) = \int_{\Gamma(t)} ds=\int_{S^1} h(t)d\sigma$. Let $\partial_t$ denote the time derivative at fixed material label $\sigma$, while spatial derivatives are taken with respect to physical arclength $s$ via $\partial_s=h^{-1} \partial_\sigma$.
The material coordinate $\sigma\in S^1$ will serve as a fixed computational domain in the numerical
discretization, which is particularly convenient both for pseudo-spectral methods and for optimization methods.
\vspace{-15pt}
\begin{figure}[h!]
    \centering
    \begin{tikzpicture}[scale=1.0*\linewidth/11cm]
    \input{figures/curve_mapping.tex}
    \end{tikzpicture}
    \caption{
Illustration of a closed elastic curve. The Lagrangian (material) coordinate $\sigma \in S^1$ parameterises the reference loop, and the embedding $\mathbf{x}(\sigma, t)$ maps it to the physical curve $\Gamma \subset \mathbb{R}^2$, on which the Eulerian arclength coordinate $s$ is measured.
}
    \label{fig:curve-mapping}
\end{figure}
\vspace{-30pt}
\subsubsection{Frenet frame and geometric evolution.}
The unit tangent and normal vectors are denoted by $\mathbf{T}=\partial_s \mathbf{x}$ and $\mathbf{N}$, respectively. The curvature $\kappa(s, t)$ is defined by the Frenet-Serret relation
$
\partial_s \mathbf{T}=\kappa \mathbf{N}.
$
The velocity of material points along the evolving curve can be decomposed in the Frenet frame as
\begin{equation}
    \partial_t \Gamma=v_t \mathbf{T}+v_n \mathbf{N},
\end{equation}
where $v_t(s, t)$ and $v_n(s, t)$ denote the tangential and normal velocity components. 
In two dimensions, the kinematic evolution equations for the metric $g$ and curvature $\kappa$ are 
\begin{equation}\label{eq: gtkt}
    \begin{split}
        \partial_t g &= 2g\left( {{\partial _s}{v_t} - \kappa {v_n}} \right)\,,\\
    \partial_t \kappa &=   (\partial _s^2 + \kappa^2 ){v_n} + ({\partial _s}\kappa ){v_t}\,.
    \end{split}
\end{equation}
In what follows, it will also be convenient to use the equation 
\begin{equation} \label{eq: ht}
\partial_t h =h\left( {{\partial _s}{v_t} - \kappa {v_n}} \right)\,.
\end{equation}
A detailed derivation, and more general discussion, of these results are given in Appendix \eqref{secapp: variation}.

Note that a more general framework of filament and frames is discussed in the works~\cite{kikuchi2024rare, nemeth2024geometric, epstein1998geometrical} by using the concepts of Cosserat media and moving frames. 

\subsubsection{Reconstruction and closure constraints.}
\label{sec: recon}

To reconstruct the embedding $\mathbf{x}$ from its metric and curvature, we introduce the tangent angle $\theta(s,t)$ such that
$
\mathbf{T}(s,t) = (\cos\theta(s,t),\,\sin\theta(s,t)).
$
From $\partial_s\mathbf{T}=\kappa\mathbf{N}$ it follows that
$
\partial_s\theta = \kappa.
$
Given an initial orientation $\theta(0,t) = \theta_0(t)$, we obtain
\begin{equation}\label{eq: theta}
    \theta(s,t) = \theta_0(t) + \int_0^s \kappa(s',t)\,ds'.
\end{equation}
The loop position is then reconstructed by integrating the tangent:
\begin{equation}
    \mathbf{x}(s,t)
    = \mathbf{x}(0,t)
    + \int_0^s \mathbf{T}(s',t)\,ds'.
    \label{eq:reconstruction}
\end{equation}
Closure of the loop requires the endpoint coincides with the starting point, leading to the geometric {\bf closure constraint} involving the double integral of the curvature:
\begin{equation}
    \int_{\Gamma(t)}
    \begin{pmatrix}
    \cos\theta(s,t)\\[2pt]
    \sin\theta(s,t)
    \end{pmatrix}
    ds
    =
    \begin{pmatrix}
    0\\[2pt]
    0
    \end{pmatrix}\,.
     \label{eq:closure}
\end{equation}
In addition, the tangent angle must return to its original direction after one full loop, giving the topological {\bf turning number constraint}
\begin{equation}
    \int_{\Gamma} \kappa(s,t)\,ds = 2\pi m
    \label{eq: turning}
\end{equation}
with turning number $m\in\mathbb{Z}$ and $m=1$ for a simple loop.

In numerical implementations, particularly when using pseudo-spectral methods (as discussed in Section~\ref{sec: pseudo} below), it is convenient to work on a fixed computational domain. Since the arc-length coordinate $s$ varies in time as the loop stretches or contracts (albeit this effect is modest in most parameter regimes of interest), we express all geometric quantities in the material coordinate $\sigma$, such that the integration domain becomes fixed. 

\subsection{Scalar field on a curve}
\label{sec: fieldcurve}

We now consider the order parameter field $c(s)$ representing a local occupancy
fraction of particles (such as the adsorbed particles in Fig.~\ref{fig: 2}). This takes values in $[0,1]$, where $c=0$ denotes an empty filament and $c=1$ denotes
maximal coverage ({\em e.g.}, close packing).

\subsubsection{Definition and normalization.}
To define $c$ and determine its kinematics, we first introduce a cumulative particle number $N(\sigma, t)$ on the unstretched material coordinate $\sigma$. 
We denote by $\rho_E(s,t)$ the particle number per unit physical (Eulerian) arc length and by $\rho_L(\sigma,t)$ the corresponding density in the (Lagrangian) reference frame coordinate. Then we have
\begin{equation} \label{eq: rho_L}
    \rho_L(\sigma, t)=\frac{d N}{d \sigma}, \quad \rho_E(s, t)=\frac{d N}{d s}=\frac{\rho_L}{h} .
\end{equation}
We assume strong phase separation, whereby the maximal coverage is attained in the dense phase of an undeformed (unstretched and flat piece) of filament at coexistence. Calling this density $\rho_0$, the concentration or occupational order parameter is constructed as
\begin{equation}
    c(s,t)= \frac{\rho_E(s,t)}{\rho_0}  = \frac{\rho_L(\sigma,t)}{\rho_\text{max}(\sigma,t)} = \frac{\rho_L(\sigma,t)}{h(\sigma,t) \rho_0} 
    \label{eq:def_c}
\end{equation}
where we have defined the $\sigma$-space maximal density as 
\begin{equation}
    \rho_\text{max}(\sigma,t)=h(\sigma,t) \rho_0\,.
\end{equation}
This increases on stretching the filament such that the maximal density per unit \emph{physical} arc-length remains $\rho_0$.

\subsubsection{Conservation law and density kinematics.}
For the entire loop, the global total concentration is conserved, having a time-independent total amount $C_0$ such that 
\begin{equation}
  C_0:=\int_{\Gamma(t)} c(s,t) d s, \quad \forall t \,.
\end{equation}
For comparison across different parameter sets, we introduce the dimensionless conserved total coverage
\begin{equation}\label{eq:c_total_nd}
C := \frac{\int_{S^1} c(s,t)h(t)\,d \sigma}{\int_{S^1} h_0\,d\sigma}, \quad \forall t \,.
\end{equation}

Adsorbed particles are assumed to be bound to the elastic filament and carried along by its motion in the ambient space.  On top of this they move along the tangent to the filament under the phase separation dynamics. The velocity of a particle at position $x^p$ is decomposed in the Frenet frame as
$
\partial_t x^p = v_t^p \mathbf{T} + v_n^p \mathbf{N}.
$
There is no desorption, so that particles share the normal velocity of the filament. Distinguishing the motion of the material points of the filament from the self-dynamics of the density field, we then have
\begin{equation}
    v_n^p = v_n\,; \quad  v_t^p = v_t + v_{\mathrm{rel}}
    \label{eq:vn_particle}
\end{equation}
where $v_{\mathrm{rel}}$ is a relative slip generated by the intrinsic dynamics of the field, specifically, collective diffusion driven by chemical potential gradients. 

This means that, in the absence of $v_{\mathrm{rel}}$, no particle contributing to the $c$ field passes a physical marker placed on any material segment of the underlying filament due to the filament deformation
\begin{equation}
\delta \mathbf{x}=\xi \mathbf{T}+\eta \mathbf{N}\,,\label{eq: shapech}
\end{equation}
where $\xi$ and $\eta$ are the tangential and normal deformation in the local frame. This indicates that, during deformation of the filament, the material is attached to the interface, and changes in the density in physical space are determined by the physical space distance between two material points. 

Delineating an interval of Eulerian arc-length coordinates by two such markers as $\omega(t)=\mathbf{x}([\sigma_1, \sigma_2],t) \subset \Gamma(t)$,
conservation of the density $\rho_E$ then implies
\begin{equation}
\delta \int_{\omega(t)} \rho_E d s=\rho_0 \cdot \delta \int_{\sigma_1}^{\sigma_2} c(\sigma) h d \sigma = 0\,,
\end{equation}
which provides the {\bf advection relation},
\begin{equation}
    \delta  c(\sigma) = -c\left(\partial_s \xi-\kappa \eta\right) \,. \label{eq: exp}
    \end{equation}
Note that $\xi$ and $\eta$ are infinitesimals, as defined via \eqref{eq: shapech}. For a fuller derivation see Appendix~\ref{secapp: variation}. 

Eq.~\eqref{eq: exp} describes the change in $c$ in the material frame caused solely by expansion of the filament. The only remaining change is due to transport by the relative velocity term $v_\text{rel}$.  In Lagrangian coordinates, this transport is governed by the continuity equation (there being no sources or sinks of particles)
\begin{equation}
    \partial_t \rho_L + \partial_\sigma J_L =0\,,
    \label{eq:lagrangian_conservation}
\end{equation}
where $J_L(\sigma)$ is the Lagrangian particle-number flux, that is, the number of particles passing a material point in unit time. Combining this with the advection term, we have
\begin{equation} \label{eq: c_t}
    \partial_t c(s)  =\rho_0^{-1} \partial_t\left(\rho_L /h\right)
=-c\left(\partial_s v_t-\kappa v_n\right) - \partial_s J \,,
\end{equation}
where we introduce $J = \rho_0^{-1} J_L=cv_{\mathrm{rel}}$ as the normalized coverage flux; this can be viewed as the diffusive flux of the density field $c(s)$ along the filament's physical arc-length coordinate $s$. 

In the next subsection, this $J$ will be specified by a gradient-flow relation involving the chemical potential, chosen to describe the physics of phase separation. 

\subsection{Energy and Gradient flow}
\label{sec: fieldenergy}
As emphasized above, the evolution of the concentration field $c(s)$ takes place along the physical arc-length coordinate $s$. Denoting the curvature of the filament by $\kappa(s)$, we first introduce a one-dimensional Helfrich-type bending energy with dimensional rigidity parameter $\beta_{\kappa}$,  in which the local spontaneous curvature with particle coverage $c$ is $\kappa_{0} c$:
\begin{equation}\label{eq: bend}
    E_{\text {bending}}[\mathbf{x},c]=\frac{\beta_{\kappa}}{2} \int_{\Gamma}\left(\kappa-\kappa_0c\right)^2 d s\,.
\end{equation}
Thus we address the case where a section of exposed filament with no adsorbed particles prefers to be straight, while a fully covered one has preferred curvature $\kappa_0$. The sign is chosen so that regions of large $c$ curve inwards more strongly (as seen in Fig.~\ref{fig: metastables} below).

To model stretchability, we employ a local linear elasticity with energy
\begin{equation} \label{eq: E_metric}
E_{\mathrm{metric}}[\mathbf{x}]=\frac{\beta}{2} \int_{S^1}(h(\sigma)-h_0(\sigma))^2 d \sigma\,,
\end{equation}
where as usual the local extensional strain measures the difference in arc-length separation between material points and the same separation in the undeformed reference state. Here $h_0(\sigma)$ is the distance metric between material points in the unstretched curve. 
The free energy functional for the concentration field $c(s)$ is of Cahn-Hilliard type, comprising a local (dimensionless) double-well potential $W(c)=\tfrac{1}{4}c^2(1-c)^2$ and a square-gradient correction responsible for interfacial tension between phases:
\begin{equation}\label{eq: cahn}
    E_{\mathrm{field}}[c]
    = \alpha \int_{\Gamma}
    W(c)
    + \tfrac{\varepsilon^2}{2}|\partial_s c|^2 ds\,,
\end{equation}
where $\alpha$ has dimensions of energy per unit length. Note however that the energetics of the $c$ field additionally depend on curvature via the coupling in \eqref{eq: bend}. Only if $E_{\text{bend}}$ is zero (which is not generically possible for a closed filament) is \eqref{eq: cahn} the sole driver of the $c$ field evolution. We note that the polynomial double-well $W(c)$ adopted here is the standard Landau expansion of a mixing free energy, so that both the energetic and entropic contributions of the underlying phase-separating system are already encoded in $W(c)$. The findings of equilibria reported in this paper depend mainly on the existence of two coexisting phases and should carry over qualitatively to other free energies that support binodal coexistence, including ones that distinguish entropic and interaction terms (such as Braggs-Williams or Flory-Huggins free energies, or more recent microscopic models\cite{Frey2023mechanochemical, PhysRevE.111.044405}).

Combining the above three energies, we construct the (dimensionful) free energy
functional \begin{equation}
E[\mathbf{x},c]
=
E_{\rm bending} + E_{\rm metric} + E_{\rm field} \,.
\label{eq: dim_energy_main}
\end{equation}
To obtain the force density on the filament, we consider a filament deformation \eqref{eq: shapech} and define the variation by
\begin{equation}
\delta E
=
\int_\Gamma \Big(F_{t}\,\xi +F_{n}\,\eta\Big)\,ds
+\int_\Gamma \mu\,\delta c\,ds.
\label{eq:shape_deriv_def}
\end{equation}
Here $F_{t}=\frac{\delta E}{\delta \mathbf{x}} \cdot \mathbf{T}$ and $F_{n}=\frac{\delta E}{\delta \mathbf{x}} \cdot \mathbf{N}$ are the force densities 
in the tangential and normal directions, and $\mu=\frac{\delta E}{\delta c}$ is the chemical potential. Their explicit forms are obtained by the geometric variation identities (Appendix~\ref{secapp: variation}) together with~\eqref{eq: dim_energy_main}.

We assume an overdamped limit with a local damping coefficient $\zeta$ per unit arclength, so that the filament
velocity is proportional to the local thermodynamic force density:
\begin{equation}
\zeta\,v_t = -\,F_{t},\qquad
\zeta\,v_n = -\,F_{n}.
\label{eq:overdamped_dim_main}
\end{equation}

We now prescribe Cahn--Hilliard dynamics \cite{10.1063/1.1744102} along the physical arclength
coordinate $s$ with a constant mobility $M$:
\begin{equation}
J = -M\,\partial_s \mu,
\label{eq:flux_dim_main}
\end{equation}
where $J$ denotes the normalized coverage flux in equation \eqref{eq: c_t} and the corresponding particle number flux is $\rho_0 J$.

\subsection{Nondimensional equations} \label{sec: nondim}

As shown in Appendix~\ref{secapp: nondimension}, we may nondimensionalize the free energy $E$
\eqref{eq: dim_energy_main} and the overdamped dynamics by choosing the characteristic scales
\begin{equation}
L_0 := \frac{1}{2\pi}\int_{S^1} h_0(\sigma)\,d\sigma,\quad
E_0 := \frac{\beta_{\kappa}}{L_0},\quad
T_0 := \frac{\zeta L_0^4}{\beta_{\kappa}},
\label{eq:scales_main}
\end{equation}
where $2\pi L_0$ is the unstretched filament length, $E_0$ is the bending-energy scale, and $T_0$ is the damping time scale. In these units, the energetic parameters become (with tilde symbols for nondimensional quantities) 
\begin{equation}
\tilde \alpha \,:=\, \frac{\alpha L_0^2}{\beta_{\kappa}}, \;\;
\tilde \beta \,:=\, \frac{\beta L_0^3}{\beta_{\kappa}}, \;\;
\tilde \varepsilon \,:=\, \frac{\varepsilon}{L_0}, \;\;
\tilde \kappa_0 \,:=\, \kappa_{0} L_0 \,, \label{eq:nodimgroups}
\end{equation}
and the dynamical parameters become
\begin{equation}
    \tilde M \,:=\, \zeta M, \quad
    \tilde \zeta \,:=\, 1.
\end{equation}
It is important to notice that increasing the reference length scale $L_0$ increases the stretching stiffness $\beta \sim L_0^3$ faster than the other dimensionless groups in \eqref{eq:nodimgroups}. Hence the dynamics approaches the near-inextensible regime when  $L_0 $ becomes large.

Including the global concentration parameter $C$ which is already dimensionless  \eqref{eq:c_total_nd} we arrive at the nondimensional parameter set  $\{{\alpha},\,{\beta},\,{\varepsilon},\,{\kappa}_0,\, m,\,{C}\}$, where we drop the tilde symbols so that in the rest of the paper all parameters are nondimensional unless specified otherwise.

\begin{widetext}
In Lagrangian coordinates $\sigma$ and with nondimensionalized field variables $h, \kappa, c$, the free energy then takes the non-dimensional form:
\begin{equation}
\mathcal{E}[\mathbf{x},c]
=\frac{1}{2}\int_{S^1} (\kappa-\kappa_0 c)^2h\,d\sigma
+\frac{\beta}{2}\int_{S^1}(h-h_0)^2\,d\sigma
+\alpha\int_{S^1}\Big[\,W(c)+\frac{\varepsilon^2}{2}\,(\frac1h\partial_\sigma c)^2\,\Big]h \,d \sigma\,,
\label{eq:energy_full_nd}
\end{equation}
from which the nondimensional chemical potential $\mu$ is found as:
\begin{equation} \label{eq:chemical_potential_nd}
\mu=\frac{\delta \mathcal{E}}{\delta c}=\alpha\mu_{\rm field}-\kappa_0\left(\kappa-\kappa_0 c\right); \qquad \mu_{\rm field}:=W'(c)-\varepsilon^2 \partial_s^2c; \qquad f:=W(c)+\frac{\varepsilon^2}{2}|\partial_s c|^2\,.
\end{equation}
Here, the derivative operator is $\partial_s = h^{-1}\partial_\sigma$ and the double-well potential energy is $W(c)=\tfrac{1}{4}c^2(1-c)^2$ as defined previously; $f$ is the corresponding free energy density including the square gradient part. 

The resulting gradient flow dynamics is then a coupled nonlinear PDE system combining \eqref{eq: gtkt}, \eqref{eq: ht} and \eqref{eq: c_t}:
\begin{equation}
    \begin{split}
        \partial_t h &= h\left( \partial_s v_t - \kappa v_n \right), \\[3pt]
        \partial_t \kappa &= \left( \partial_s^2 + \kappa^2 \right) v_n + (\partial_s \kappa) v_t, \\[3pt]
        \partial_t c &=- c(\partial_s v_t - \kappa v_n) + M\partial_s^2 \mu\,, 
    \end{split}
    \label{eq:evolution_gkc}
\end{equation}
where the velocities obey the overdamped dynamics:
\begin{equation}
    \begin{split}
v_t &
= \kappa_0c\,\partial_s\!\big(\kappa-\kappa_0 c\big)
+\partial_s\big(\beta(h-h_0) - \alpha\mathcal P\big),
\\[4pt]
v_n &
= -\partial_s^2 \big(\kappa-\kappa_0 c\big)
-\frac12\,\kappa\big(\kappa-\kappa_0 c\big)^2
+\kappa \big(\beta(h-h_0) - \alpha\mathcal P\big)\,.
\end{split}
    \label{eq:vt_vn}
\end{equation}
Here the chemical stress $\mathcal{P}$ derives from the compositional order parameter via the Cahn-Hilliard free energy \eqref{eq:energy_full_nd}  \cite{kendon2001inertial}: 
\begin{equation} \label{eq: P_stress}
\mathcal P:=c\,\mu_{\rm field}-f +\varepsilon^2 |\partial_sc|^2 \,.
\end{equation}
\end{widetext}
For numerical convenience, we adopt below a uniform reference metric $h_0=1$, so that the unstretched filament length is $\int_{S 1} h_0 d \sigma=2 \pi$.
Equations \eqref{eq:evolution_gkc} and \eqref{eq:vt_vn} form a coupled three-fields, fourth-order nonlinear PDE system of Willmore-type gradient flow\cite{willmore1993riemannian} and Cahn-Hilliard flow, which in our reduced units, fully specifies the evolution of the closed filament and its internal concentration field, subject to the global constraints.

As previously discussed, the global constraints are: closure of the curved filament, \eqref{eq:closure}; conservation of the turning number \eqref{eq: turning}, and conservation of global concentration \eqref{eq:c_total_nd}. The last of these is automatically sustained via the Cahn-Hilliard dynamics, whereas the first two can easily be written in dimensionless form for use with the dynamical equations (\ref{eq:evolution_gkc}, \ref{eq:vt_vn}) and/or algorithms that directly minimize \eqref{eq:energy_full_nd}.

Given the geometric fields $(h,\kappa)$, we reconstruct the embedding by first computing $\theta$ from
\eqref{eq: theta}, and then integrating the tangent as in \eqref{eq:reconstruction} but now in $\sigma$ coordinates:
\begin{equation}
\mathbf{x}(\sigma,t)
=
\mathbf{x}(0,t)
+\int_{0}^{\sigma}
\begin{pmatrix}
\cos\theta(\sigma',t)\\[2pt]
\sin\theta(\sigma',t)
\end{pmatrix}
h(\sigma',t)\,d\sigma' .
\label{eq:reconstruction_sigma}
\end{equation}
This procedure creates the physical space curve that defines the locus of the elastic filament in the plane.

\section{ANALYSIS} \label{sec: analysis}

On a closed elastic filament, phase separation is no longer driven solely by the standard Cahn-Hilliard equation as in a flat 1D space with periodic boundary conditions; instead, it exhibits a nontrivial coupling to the local curvature which is further affected by the constraints on global geometry and topology. The closure and turning number requirements of Section~\ref{sec: recon} impose integral constraints on the curvature distribution. Specifically, any infinitesimal shift of a phase boundary will alter the lengths of the adjacent domains and potentially violate the curvature integral constraints unless some part of the system deviates from its locally preferred curvature. To maintain closure, the curvature field must readjust across the entire loop. Thus the curvature distribution influences interface energetics, and interface placement influences the curvature distribution. This yields a strongly coupled and nonlocal variational system. 

Such feedback does not exist for a circular geometry that is fixed and (metrically) flat: here the motion of interfaces between phase-separated domains does not induce any global geometric response. The dynamics and free energy are then just those of phase separation in a 1D domain with periodic boundary conditions. It is known in this case that any configuration with more than one domain of each type is unstable because there is a weak attraction between interfaces (albeit exponentially small at large separations). Any multi-domain state is connected by a path of strictly decreasing $\mathcal{E}$ to one with just one domain of each type -- here, one liquid and one vapour domain. Hence there are no metastable states.
In contrast, phase separation on a closed elastic filament involves a global competition between the scalar concentration field and the loop's geometry. The admissible configurations, metastable states, and energy landscape are fundamentally altered, admitting a class of morphology selection mechanisms that arise only for a closed curve with geometry-concentration coupling.

\subsection{Sharp-interface limit}
Analytic progress in calculating the free energy $\mathcal{E}$ for phase separated states can be made in the sharp-interface limit, that is, $\varepsilon/L_0\to 0$. (We also take the inextensible limit, $\beta\to\infty$, in this Section.)

First, consider the case of an open filament. Here the bending energy can be minimized to zero with any pattern of $c$-field domains by setting $\kappa = \kappa_0 c$ everywhere -- giving straight segments in the low-density (vapour) regions and circular arcs of curvature $\kappa_0$ in the dense (liquid) ones. Then for a domain pattern with $N$ interfaces the energy is simply $\gamma N$ where $N$ is the number of domain walls (interfaces) and $\gamma$ is the Cahn-Hilliard interfacial tension. This approximation neglects the exponentially small interaction energy between interfaces; the phase-separated states are then neutrally stable with respect to any interfacial displacements that conserve the global concentration $C$, such as translating one chosen liquid-phase domain (and its two interfaces) rigidly along the contour $s$. 

On a closed loop, however, the constraints of closure and turning number drastically restrict this admissible set of (almost) degenerate free-energy minima.
Further details are provided in Appendix \ref{secapp: sharp}. The results are formally best understood by first introducing Lagrange multipliers $\lambda$ for the turning number in the presence of a fixed concentration pattern $c(s)$ but ignoring closure, and seeking minima of: 
\begin{equation}
\begin{split}
        \mathcal{L}[\kappa, c]=&\frac{1}{2} \int_{\Gamma}\left(\kappa-\kappa_0 c\right)^2 d s+\gamma N 
        -\lambda\left(\int_{\Gamma} \kappa d s - 2 \pi m \right)
\end{split}
\end{equation}
with $\lambda$ chosen to enforce
$
\int_{\Gamma} \kappa d s=L\bar\kappa
=2 \pi m$ for turning number $m$  (see \eqref{eq: turning} above).

Proposition \ref{prop: optimal_kappa} in Appendix \ref{secapp: sharp} shows that this full variation with respect to $\kappa$ gives $\lambda = \bar \kappa - \kappa_0\bar c $  with $\bar c \equiv C_0/L$, 
and hence the optimal curvature configuration $\kappa^*(s)$ obeying 
\begin{equation}\label{eq: optimal_kappa}
    \kappa^*(s)=\bar\kappa +\kappa_0(c(s)-\bar c)\,.
\end{equation}
Thus, when the closure constraint is ignored, but with turning number $m$  imposed, all spatial domain patterns with $N$ sharp interfaces and the required global concentration $C_0$ are degenerate. The free energy minimiser for such a filament is composed of alternating circular arcs of curvature 
\begin{equation} \label{eq: kappa_plus}
    \kappa_{+}= \kappa_0 + \lambda\;,  \qquad  \kappa_{-}=  \lambda\;.
\end{equation}
The energy in this case is 
\begin{equation} \label{eq: energy_sharp}
    \mathcal{E}_{\text {sharp }}=\frac{L}{2}\left(\bar{\kappa}-\kappa_0 \bar{c}\right)^2+\gamma N,
\end{equation}
which only depends on the globally constrained values of $m,\bar{c}$, and on $N$. This bending energy $\tfrac{L}{2} \lambda^2$ should be regarded as the necessary bending energy that must be paid in the system with that amount of concentration to satisfy the turning constraint. This result directly generalizes the case of  an open filament with completely unconstrained curvature, discussed in the previous paragraph.

If we finally return to the fully closed filament problem, this obeys the closure constraint \eqref{eq:closure}, alongside the turning number constraint and $c$ conservation as just imposed. Any such closed filament must also have even $N$.
Importantly however,  the turning number relation \eqref{eq: optimal_kappa} and closure condition \eqref{eq:closure} cannot in general both be satisfied for $N=2$: fine-tuning of other parameter values is needed to find a solution. (See Appendix~\ref{secapp: sharp} for further discussion.) However, they can generically be satisfied for $N\geq 4$ so that, in the sharp interface limit, it is never necessary to have more than two domains of each type (four domain walls in total) in order to minimize the global free energy. 

\begin{figure*}
  \centering
    \includegraphics[width=.95\linewidth]{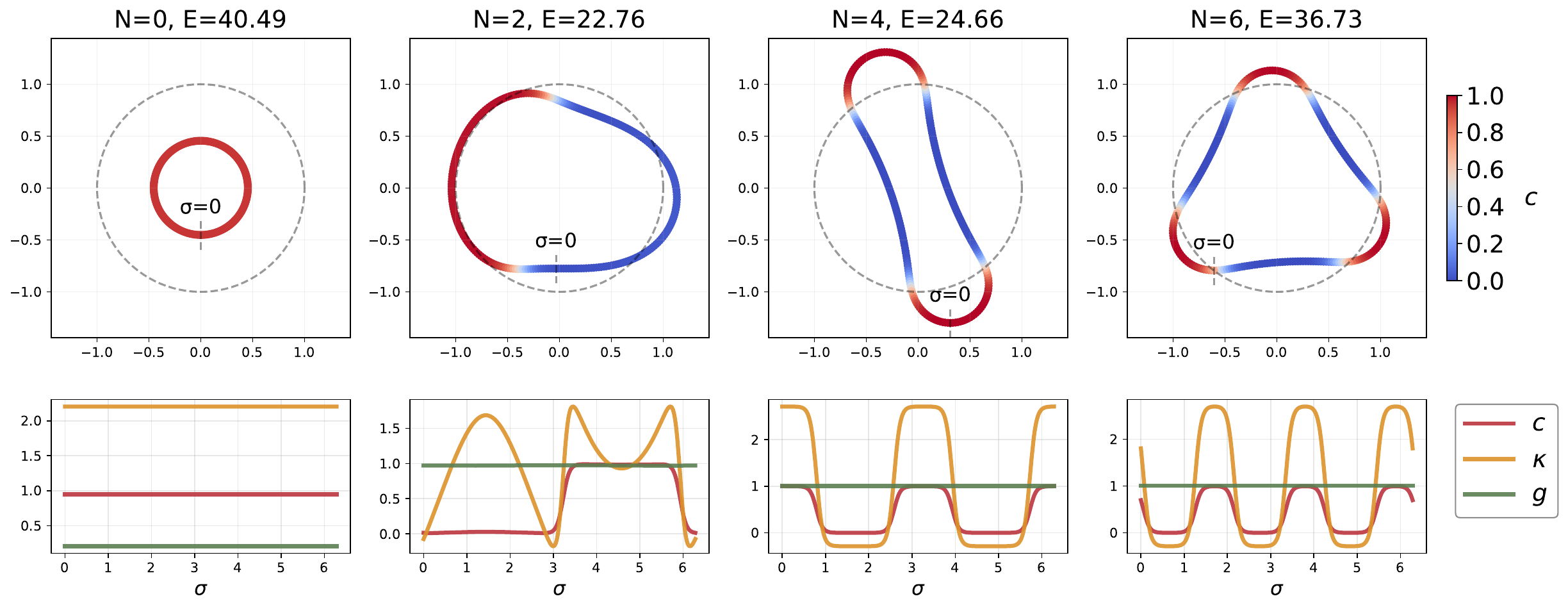}
    \caption{
    Free energy minimisers for the case $\alpha=1024, \beta=20, \epsilon=0.05, \kappa_0=3, C=0.43$, classified left to right by interface number: circle $(N=0)$, acorn $(N=2)$, peanut $(N=4)$, polygon $(N \geq 6)$. Top row: the reconstructed real-space embeddings, colour coded by $c$; the dashed grey circle indicates the curve at rest in the absence of the coupled concentration field, and the small marker labels the position $\sigma=0$. Bottom row: the corresponding profiles of the concentration $c$ (red), curvature $\kappa$ (orange), and metric $g$ (green) along the Lagrangian coordinate $\sigma \in[0,2 \pi$). The $N=0,4,6$ cases satisfy the optimal relation ~\eqref{eq: optimal_kappa}, with $\kappa$ essentially piecewise constant on alternating arcs ($\kappa$ for $N=0$; the two values $\kappa_{ \pm}$for $N \geq 4$ ). The $N=2$ (acorn) case is qualitatively different: the closure constraint ~\eqref{eq:closure} cannot in general be met by a single pair of circular arcs of constant curvatures $\kappa_{ \pm}$, so the system pays additional bending energy by deforming each domain away from a circular arc. This is visible in the top-row for the case $N=2$, in contrast to the nearly flat $\kappa$ plateaus seen for $N=4,6$.
}
     \label{fig: metastables}
\end{figure*}

Moreover, from \eqref{eq: energy_sharp}, we identify an important value $C=C^*$ that satisfies $\lambda = \bar{\kappa}-\kappa_0 \bar{c}=0$:
\begin{equation}
C^* = \frac{1}{\kappa_0}\,.    
\label{eq: Cstar}
\end{equation}
At this global density the high-density phase can in principle accommodate the full turning budget without curvature mismatch and the bending energy becomes minimal. At this value, the filament comprises pieces with curvature $\kappa_0$ and $0$ obeying the equation \eqref{eq: kappa_plus}. The corresponding minimizer is typically capsule/rod-like, with two nearly straight segments joined by semicircular endcaps.

\subsection{Minimizers}

The above findings for the sharp interface limit motivate the following classification for minimizers (either stable or metastable) of $\mathcal{E}$ in terms of their interface number, $N$. For any particular parameters there is generically only one stable $N$ value but, quite unlike the case of phase separation on a fixed circle, there can also be metastable ones.
\begin{itemize}
    \item \textbf{$N=0$ (uniform shape)}. The loop minimizes the interfacial energy but must accommodate a global mismatch between the average curvature $\bar{\kappa}=2 \pi m / L$ and the preferred curvature $\kappa_0 \bar{c}$.
Inextensible loops can satisfy closure only by maintaining this mismatch. (Note however that extensible ones can also relieve it through changes of total length, effectively diluting or condensing the composition so as to move $\kappa_0\bar c$ closer to the required $\bar \kappa$.) 
\item \textbf{$N=2$ (acorn shape)}. The system introduces one pair of interfaces to separate the $c$ field into two coexisting domains. Although this reduces curvature-composition frustration, the closure constraint cannot generally be satisfied by two circular arcs of fixed curvature, requiring non-circular-arc deformations that increase bending energy.
\item  \textbf{$N=4$ (peanut shape)}. The configuration provides the minimal even partition that permits geometric closure by alternating four arcs of curvatures $\kappa_{ \pm}$. This $N=4$ solution family has dimension zero, hence the symmetric arrangement with two pairs of identical domains is the unique minimizer of this type, in the sharp interface limit. While often biconcave (a peanut), cases where the two domain types have the same sign of curvature creating a convex loop are also often found as minimizers (for examples of this see Section~\ref{sec:results} below).
\item \textbf{$N \geq 6$ (polygon shape)}. These sharp-interface configurations form a continuous family of minimizers with dimension $N-4$ under the three global constraints of closure, turning, and total composition. Such a minimizer is not a global miminum when we consider these three constraints, since it pays for two or more additional interfaces without any additional reduction in the elastic energy of the filament. Note however that any  further global constraints (such as a fixed enclosed area) would in general not be satisfied by the $N=4$ solution, moving the global energy minimizer to $N\ge 6$.
\end{itemize}

Examples of each type are shown in Fig.~\ref{fig: metastables}. The respective (nondimensionalized) energetic contributions for each of these examples is in Table \ref{table:metastables}.
\begin{table}[hbp]
  \centering
  \begin{tabular}{lccccc}
    \hline
    Case & $E$ & $E_{\text{bending}}$& $E_{\text{well}}$ & $E_{\text{interface}}$& $E_{\text{metric}}$\\
    \hline
    N=0 & $4.1\times10^{1}$ & $1.2\times10^{0}$ & $1.8\times10^{0}$ & $0.0\times10^{0}$ & $3.8\times10^{1}$ \\
    N=2 & $2.3\times10^{1}$ & $1.0\times10^{1}$ & $6.6\times10^{0}$ & $5.8\times10^{0}$ & $2.6\times10^{-2}$ \\
    N=4 & $2.5\times10^{1}$ & $5.2\times10^{-1}$ & $1.2\times10^{1}$ & $1.2\times10^{1}$ & $5.4\times10^{-4}$ \\
    N=6 & $3.7\times10^{1}$ & $5.2\times10^{-1}$ & $1.8\times10^{1}$ & $1.8\times10^{1}$ & $5.4\times10^{-4}$ \\
    \hline
  \end{tabular}
  \caption{Free energy decomposition of the states in Fig.~\ref{fig: metastables}. Alongside the usual bending and stretching terms, we introduce $E_\text{well} = \alpha \int_{\Gamma} W(c) \, ds$ as the bulk double-well free energy contributions of the two coexisting phases (which would be zero for a fixed circular geometry), and $E_\text{interface} = \alpha \int_{\Gamma}\frac{\varepsilon^2}{2}\left|\partial_s c\right|^2 ds$. The latter equates to $N\gamma$ in the sharp interface limit. }
  \label{table:metastables}
\end{table}

\begin{figure*}[t]
    \centering

    \begin{subfigure}[t]{0.49\linewidth}
        \centering
        \includegraphics[width=\linewidth]{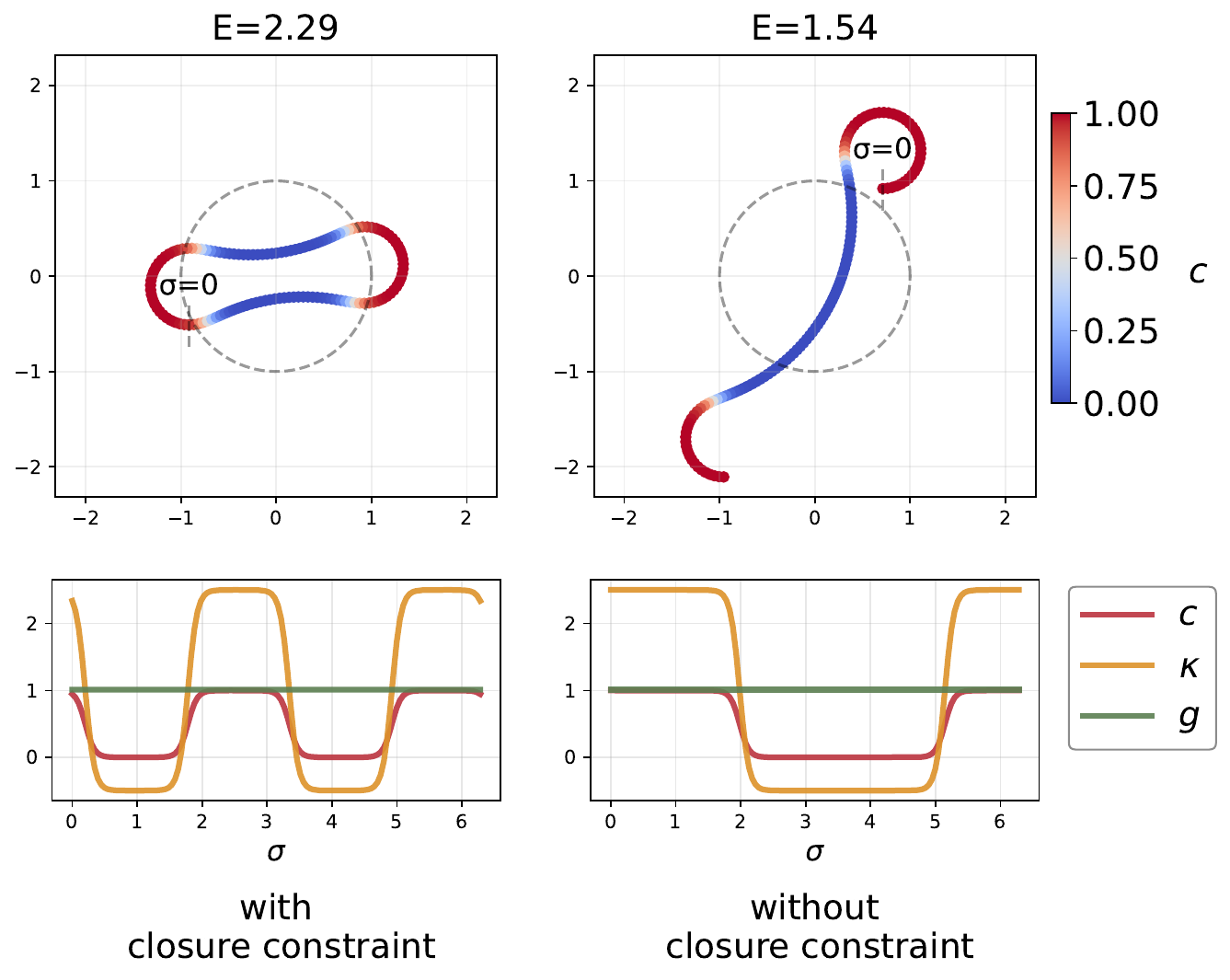}
        \caption{$\alpha=64$, $C=0.5$. }
        \label{fig:closure_compare_a}
    \end{subfigure}
    \hfill
    \begin{subfigure}[t]{0.49\linewidth}
        \centering
        \includegraphics[width=\linewidth]{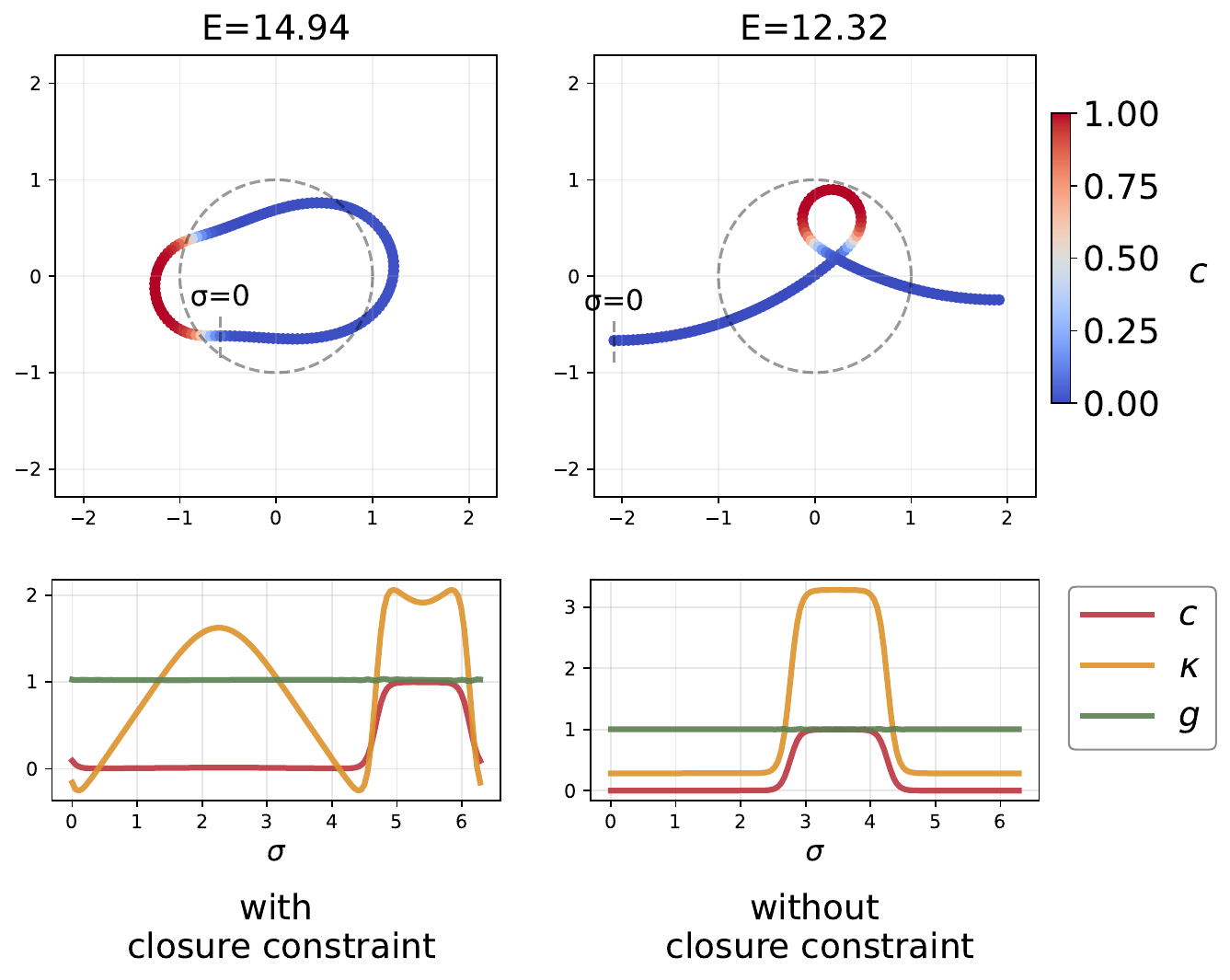}
        \caption{$\alpha=1024$, $C=0.24$.}
        \label{fig:closure_compare_b}
    \end{subfigure}

    \caption{
        Effect of the closure constraint. Upper panels: real-space shapes with concentration field in colour scale. Lower panels: plots of concentration, curvature and metric. For $\alpha=64$ and $C=0.5$ (left), imposing the closure constraint yields a closed global minimum (a `peanut' curve) with $N=4$, whereas with only the turning number constraint~\eqref{eq: turning} and periodic boundary conditions the optimum has $N=2$. The global concentration is $C=0.5$ so that the two phases have the same total arc length. For $\alpha=1024$ and $C=0.24$ (right), the global optimum remains in the $N=2$ (acorn) category; removing the closure constraint eliminates the need to trade bending energy for closure and therefore lowers the optimal energy. Other parameters are $\kappa_0=3, \varepsilon=0.05, \beta=20$ as in Fig.~\ref{fig: metastables}; for more details see Appendix \ref{secapp: numerical}.}
    \label{fig:closure_compare_combined}
\end{figure*}

When the interface width is finite, the degeneracy for $N\ge 6$ configurations is lifted, first because of the weak attraction between interfaces, and secondly because the smooth concentration profiles that now connect neighbouring domains introduce new ways to couple composition and geometry locally. This leads to distinct metastable morphologies that are absent in the sharp-interface limit. An analytical study is challenging, but numerical methods are used to study some of these cases in Section \ref{sec: numerical} below. 

\subsection{Closure constraint}

Before considering our examination of the free energy minimizers and of the dynamics that leads to them, we remark briefly on the issues introduced by the closure constraint  \eqref{eq:closure}. This constraint is essential, as we have seen, when formulating the problem using intrinsic geometry. While much of the previous literature emphasises topological constraints \cite{elliott2021domain}, intrinsic discretisations optimise using quantities defined along the material coordinate $\sigma$ ({\em e.g.,} compositional and curvature fields) that do not, by themselves, enforce geometric closure of the embedded curve in $\mathbb{R}^2$. In particular, periodic boundary conditions on the curvature field do not exclude non-closed embeddings.

The closure constraint is what creates trade-offs between the different energies in the system \eqref{eq:energy_full_nd}. Without it, the optimum is always a single pair of domains ($N=2$); there is no reason ever to increase $N$ because there is no frustration of the curvature energy. (Likewise, without closure, the stretching energy does not trade off nontrivially with the other energy terms.) Second, among $N=2$ configurations (acorn shapes), the closure constraint changes the minimum energy because one cannot generically construct a closed curve from circular arcs with the preferred curvatures $\kappa_\pm$. 
These remarks, which become exact in the sharp interface limit, are illustrated with simulation data (found using the methods introduced in the next Section), for moderately sharp interfaces, in Fig.~\ref{fig:closure_compare_combined}.

\section{NUMERICAL METHODS} \label{sec: numerical}

This Section describes the numerical methods used to study the coupled curvature--composition problem on a closed loop. We work in a fixed material coordinate $\sigma\in[0,2\pi)$, representing the intrinsic fields $h(\sigma),\kappa(\sigma),c(\sigma)$ on a stationary periodic domain. This avoids remeshing under deformation and provides a consistent setting for geometric evolution and field advection. Line integrals are evaluated as $\int_{\Gamma}(\cdot)\,ds=\int_0^{2\pi}(\cdot)\,h(\sigma)\,d\sigma$, which we use to compute energies and constraint residuals.

Within this representation, we (i) compute metastable equilibria via direct constrained minimization of \eqref{eq:energy_full_nd} under mass, turning number, and closure constraints, and (ii) integrate the gradient-flow PDEs using pseudo-spectral discretization and IMEX time stepping to obtain dynamical pathways toward stable and metastable morphologies, while monitoring discrete energy dissipation and constraint satisfaction.

\vspace{-10pt}
\subsection{Direct constrained energy minimization}\label{sec: trust_region}
We compute metastable equilibria by directly minimizing the full energy \eqref{eq:energy_full_nd}
in a finite-dimensional Fourier representation on the periodic material domain $\sigma\in[0,2\pi)$.
We approximate each field $F(\sigma)\in\{h,\kappa,c\}$ by a truncated Fourier series with $N_{\rm mode}$ modes,
\begin{equation}
F(\sigma)=a_0+\sum_{n=1}^{N_{\rm mode}}\big(a_n\cos n\sigma+b_n\sin n\sigma\big),
\end{equation}
and use the Fourier coefficients as optimization variables.

The resulting constrained nonlinear program with constraints \eqref{eq:closure}, \eqref{eq: turning}, and \eqref{eq:c_total_nd} is solved using SciPy's \verb|trust-constr| algorithm \cite{trust-constr}, which is designed for nonlinear equality/inequality constraints and is robust for our integral constraints.

To reduce missed minima across parameter space, we perform a neighbor-seeded continuation sweep on a parameter grid \cite{allgower2003introduction}: each grid point is re-optimized using converged solutions from nearby points as initial guesses, and we retain the lowest-energy solution. Iterating this sweep yields a self-consistent set of metastable branches and transition boundaries.

\vspace{-10pt}
\subsection{Time-dependent gradient flow simulations}\label{sec: pseudo}
We compute dynamical pathways toward metastable morphologies by numerically integrating the time-dependent gradient flow \eqref{eq:evolution_gkc}-\eqref{eq:vt_vn} using a pseudo-spectral method. 

Because the model is a high-order, strongly nonlinear system on a closed loop, the pseudo-spectral method is most suitable for the following reasons. 
First, on a periodic domain, derivatives reduce to simple multipliers $(ik)^n$ in Fourier space, which makes high-order operators easy to implement and highly accurate. This is particularly important when comparing morphologies and energy levels, where errors in high-order terms can bias the dynamics.
Second, the material coordinate $\sigma\in[0,2\pi)$ is naturally periodic, so the spectral representation enforces periodic boundary conditions automatically, avoiding explicit endpoint identification.

We integrate the coupled system \eqref{eq:evolution_gkc}--\eqref{eq:vt_vn} using a pseudo-spectral discretization in $\sigma$ and an IMEX second-order semi-implicit backward differentiation formula (SBDF2) time stepper \cite{ascher1995implicit, wang2008variable}, implemented in Dedalus \cite{burns2020dedalus}. At each step $t^n\to t^{n+1}$, we first advance the three fields $(h,\kappa,c)$ simultaneously with SBDF2. We then assemble the thermodynamic stress $\mathcal P$ \eqref{eq: P_stress} from the updated fields and apply a low-pass spectral filter to $\mathcal P$. Using the filtered stress, we compute the effective line tension $\beta(h-h_0)-\alpha\mathcal P$ and evaluate the velocities $(v_t,v_n)$ from \eqref{eq:vt_vn}. Finally, we apply a light low-pass filter to the assembled normal velocity $v_n$ before using $(v_t,v_n)$ in the next update of \eqref{eq:evolution_gkc}.  The evolution of velocities depend only on the state ($h, \kappa, c$) rather than the reconstructed embedding $\mathbf{x}$, so any closure defect in $\mathbf{x}$ does not feed back or accumulate among iterations.

This IMEX--spectral scheme is suited and efficient for the present high-order, strongly nonlinear dynamics. Implicit treatment of the stiff Cahn--Hilliard diffusion, together with targeted filtering of $\mathcal P$ and $v_n$, suppresses high-frequency contamination and improves long-time stability. The Fourier pseudo-spectral discretization resolves high-order derivatives accurately on the periodic loop, and the modular velocity assembly localizes stabilization to the most aliasing-prone terms. The method is also simple and readily extensible, serving as a baseline that can be upgraded to formal operator-splitting methods \cite{blanes2024splitting} if needed.

\section{RESULTS} \label{sec:results}
\begin{figure*}[t]
    \centering

    \begin{subfigure}[t]{0.48\linewidth}
        \centering
        \includegraphics[width=\linewidth]{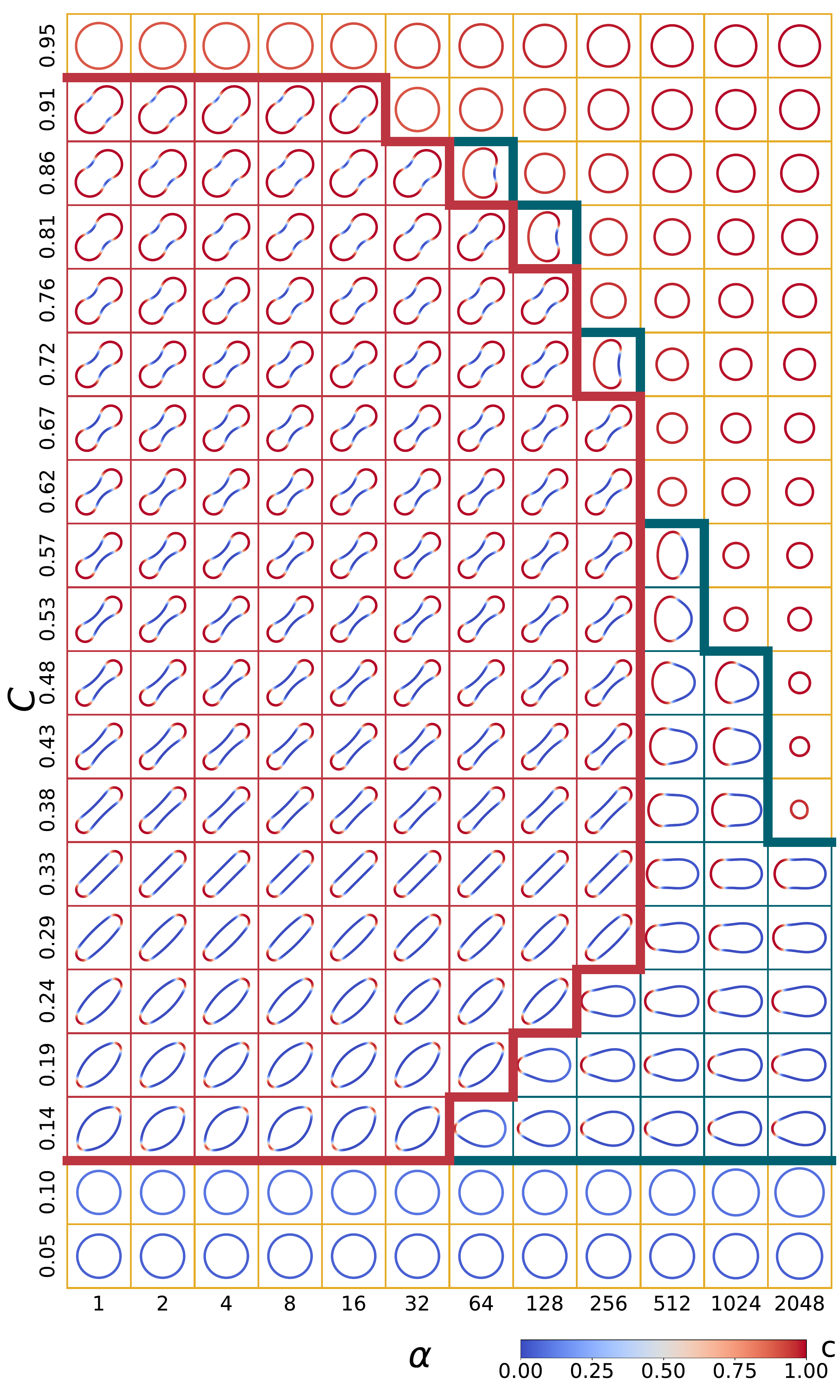}
        \caption{$\varepsilon=0.05$}
        \label{fig:phase_ep05}
    \end{subfigure}\hfill
    \begin{subfigure}[t]{0.48\linewidth}
        \centering
        \includegraphics[width=\linewidth]{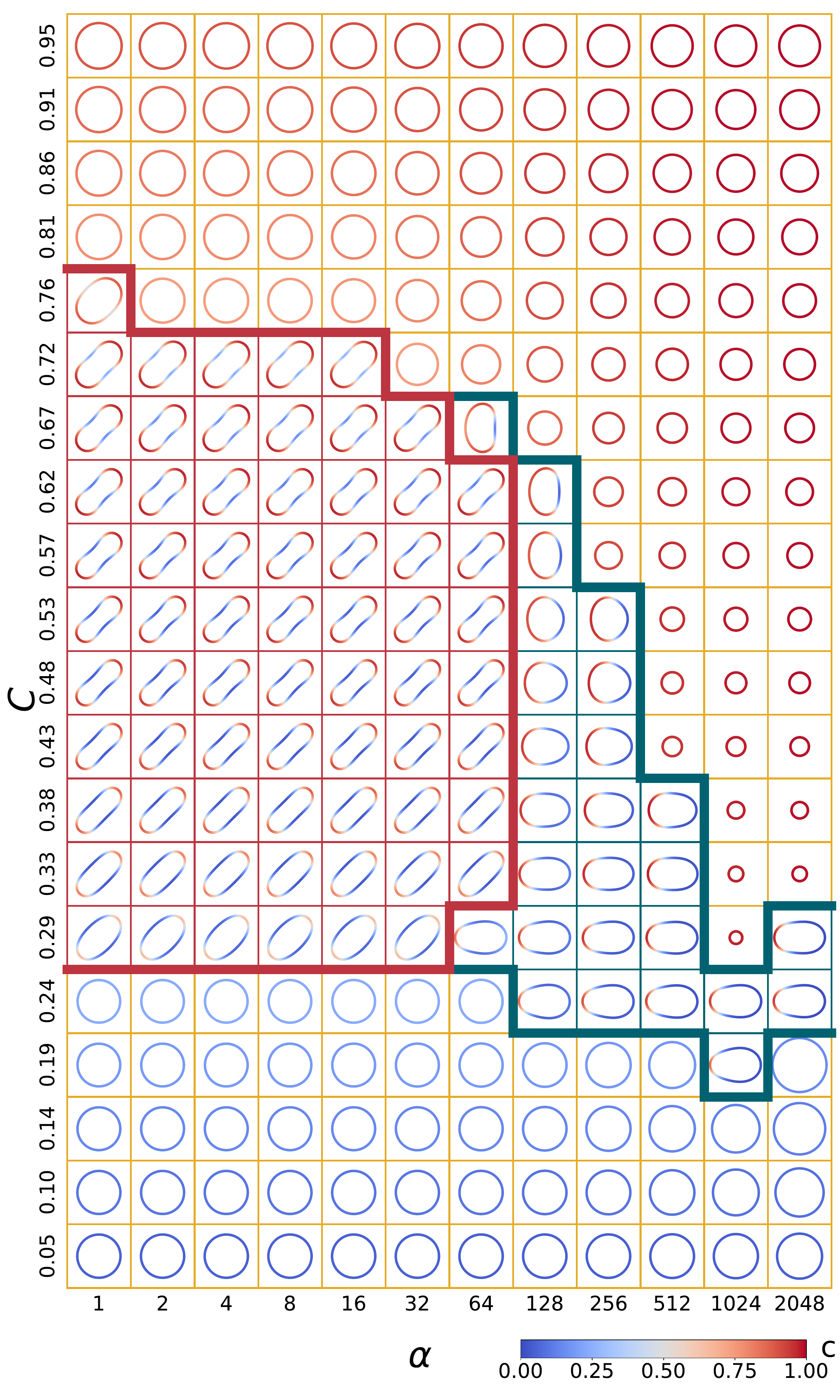}
        \caption{$\varepsilon=0.15$}
        \label{fig:phase_ep15}
    \end{subfigure}
    \caption{ Phase diagrams in the ($\alpha, C$) plane of minimizer morphologies, for two values of the interfacial parameter $\varepsilon=0.05, 0.15$. Within each morphology, the red-blue color scale depicts the local composition $c$. The solid red and green lines delineate phase boundaries of the $N=4$ (peanut) and $N=2$ (acorn) phases; outside these (yellow shading) the minimizer is a compositionally homogeneous circle ($N=0$). Increasing $\varepsilon$ raises the interfacial cost and shifts the optimum toward states with fewer interfaces. Further parameters: $\kappa_0 = 3$, $\beta=20$;  the $C$-axis ticks shown are rounded values of a uniform grid, exact values and} numerical details are in Appendix \ref{secapp: numerical}.
    \label{fig:phase_diagrams_alphaC}
\end{figure*}

Since our model does not include noise, it leads to morphologies that correspond to dynamically stable steady states of the deterministic gradient-flow dynamics, {\em i.e.}, constrained local minima of the free-energy functional. We classify the outcomes in terms of \textit{morphological phases}, distinguished by discrete geometric diagnostics, most importantly the interface number $N$. Phase boundaries in the diagrams indicate parameter values where competing stable morphologies exchange global optimality via an energy crossing, giving an abrupt (often discontinuous) change of shape and frequently a jump in $N$.

\subsection{Control parameters }
The nondimensional parameter set $\left\{\alpha, \beta, \varepsilon, \kappa_0, m, C\right\}$, was define in Section \ref{sec: nondim} above. We generally fix $m=1$ in this work; the morphology is then controlled primarily by  $\alpha, \varepsilon, \kappa_0$, and $C$, which together determine the competition between interfacial cost and curvature-composition mismatch. In particular, $\alpha$ and $\varepsilon$ penalize the creation of interfaces and thus favor smaller domain number $N$, whereas $\kappa_0$ and $C$ control how easily the high-density phase can support the required turning number and reduce the bending mismatch. The remaining parameter, $\beta$ controls stretching. As emphasized in the introduction we are primarily interested in the near-inextensible limit and so take $\beta=20$.
In much of the parameter space the metric then remains almost constant, 
$E_{\rm metric}$ is subdominant, and morphology is controlled primarily by the  competition between phase separation and bending.
At sufficiently large $\alpha$ (and $\varepsilon$), however, the field-induced stresses can cause
stretching to contribute nontrivially to morphology selection for the chosen $\beta$ value as we shall see below. 

\subsection{Minimizer phase diagrams} \label{sec: phase_diagram}
We have explored morphologies by free energy minimization on various two-dimensional cuts through the parameter space, as detailed below.

\subsubsection{Phase diagrams on the \texorpdfstring{$(\alpha, C)$}{(alpha, C)} plane.}
Figure~\ref{fig:phase_diagrams_alphaC} maps the globally minimizing morphologies (classified by the interface number, $N$) in the $(\alpha,C)$ plane for two different interfacial widths $\varepsilon$  corresponding to relatively sharp and relatively broad interfaces on the scale of the filament length. The observed structure is governed by a competition between (i) interfacial cost (favoring interface number $N=0$ or $N=2$)), and (ii) curvature--composition mismatch (often favoring larger $N$). The latter can be reduced by redistributing the turning budget across multiple enriched segments when it cannot be efficiently accommodated by a single dense domain. 

The boundaries separating the $N=0$, $N=2$, and $N=4$ regions correspond to loci where morphologies coexist as local minima and exchange global optimality via an energy crossing, producing an abrupt change in the identity of the minimizer. Consistent with the interfacial-cost picture, increasing $\varepsilon$ systematically shrinks the $N=4$ and $N=2$ region and expands the parameter range where $N=0$ is optimal. Note that at the boundaries of the latter range (shaded yellow) stretching plays an increasingly important role as the coupling constant $\alpha$ is increased. This is because any phase-separated state can in principle avoid paying interfacial costs by remaining uniform, whereupon shrinking or expanding the filament moves the global density  closer to one of the binodals ($c = \pm 1$), thereby reducing the chemical cost of the uniform state. For large $\alpha$, where minimization of $E_{\rm field}$ takes priority over other energy contributions, this becomes an advantageous strategy. At small $\alpha$ the length of the filament is effectively fixed; here the transition from the $N=4$ morphology to $N=0$ occurs when the global composition is such that there is insufficient minority phase to create well-separated interfaces. This criterion depends on $\varepsilon$; the latter dependence is responsible for the differences, between the two phase diagrams shown, in where these boundaries lie. It is important to note that stretching does not affect the boundaries between different morphologies with $N\ge 2$; as quantified in Table \ref{table:metastables}, stretching plays only a small role in the energetics of all such morphologies.

\begin{figure}[h]
    \centering
    \includegraphics[width=1\linewidth]{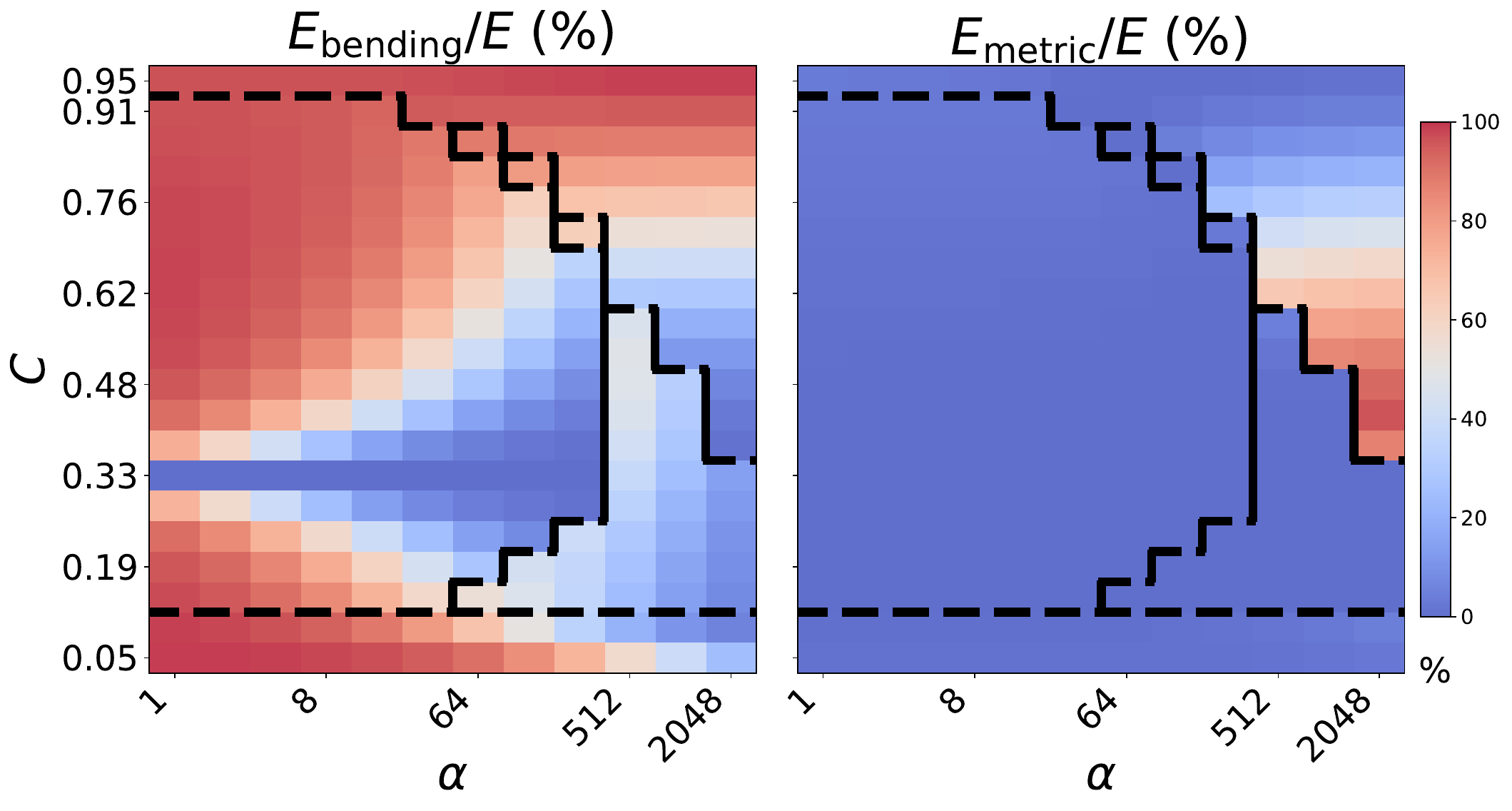}
    \caption{
    Heatmaps of the energy fractions $E_{\text {bending }} / E$ (left) and $E_{\text {metric }} / E$ (right), evaluated at the minimum-energy state across the ($\alpha, C$) phase plot \ref{fig:phase_ep05} with phase boundaries shown as dashes. The left figure shows a minimal bending energy proportion near $1/\kappa_0$ for the $N=4$ metastable shapes.
(Numerical details in Appendix \ref{secapp: numerical}.)}
    \label{fig:energy_decomp}
\end{figure}

The energetics behind the phase diagrams on the $(\alpha,C)$ plane is further explored by the energy decomposition in Fig.~\ref{fig:energy_decomp}. This shows the phase boundaries for the case $\varepsilon = 0.05$ (left panel in Fig.~\ref{fig:phase_diagrams_alphaC}) with overlaid heat maps for the bending and stretching contributions. The bending energy shows a sharp downward feature at $C = C^* = 1/\kappa= 0.33$ which is the global composition where the turning constraint is satisfiable without bending cost. Bending dominates the stretching energy elsewhere in the phase diagram, except for a region close to the boundary between $N=0$ and $N\ge 2$ states which occurs at high $\alpha$. Here  there is a tradeoff between shrinkage and bending as described already above.

For $C$ near $C^*$, the turning constraint $\int_\Gamma \kappa\,ds=2\pi m$ is most compatible with the local curvature
preference $\kappa\approx \kappa_0 c$, so that curvature--composition mismatch is subdominant (cf.~\eqref{eq: energy_sharp}).
In the near-inextensible regime this matching condition gives $\kappa_0\int_\Gamma c\,ds \approx 2\pi m$, hence
$C^*=m/\kappa_0$ (so $C^*=1/\kappa_0$ for $m=1$). 

When $C<C^*$, there is insufficient composition to provide the required turning angle by staying close to the preferred curvature, $\kappa \approx \kappa_0 c$. The filament must then generate substantial curvature in regions where $c \approx 0$ where the spontaneous curvature is zero. This necessarily increases the curvature-composition mismatch $\kappa- \kappa_0 c$. As a result, the elastic part of the energy budget becomes relatively larger. 

\begin{figure*}[t]
  \centering
    \begin{subfigure}[t]{0.75\linewidth}
        \centering
        \includegraphics[width=\linewidth]{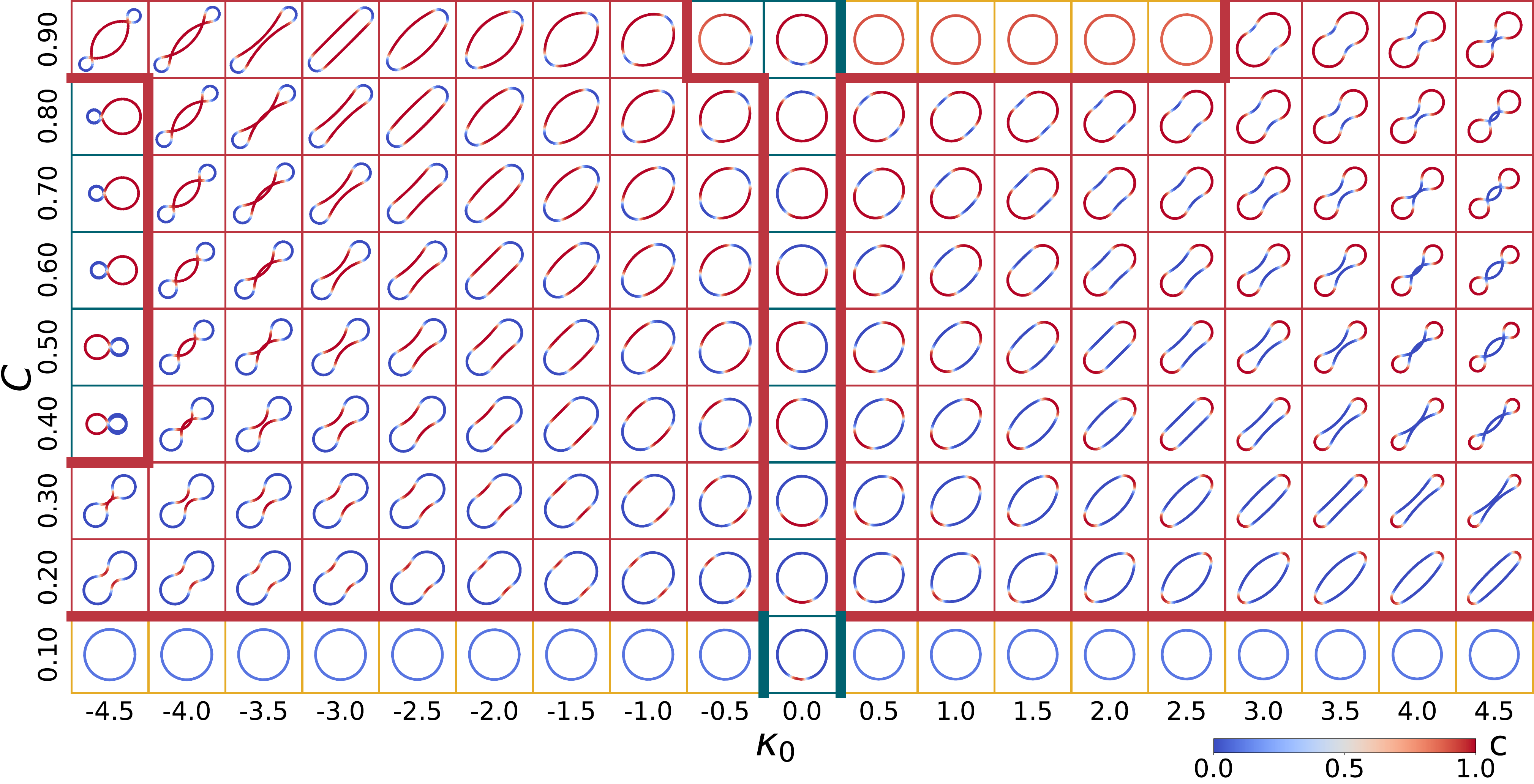}
        \caption{$\alpha=1$}
        \label{fig:phase_a1}
    \end{subfigure}
    \\
    \begin{subfigure}[t]{0.75\linewidth}
        \centering
        \includegraphics[width=\linewidth]{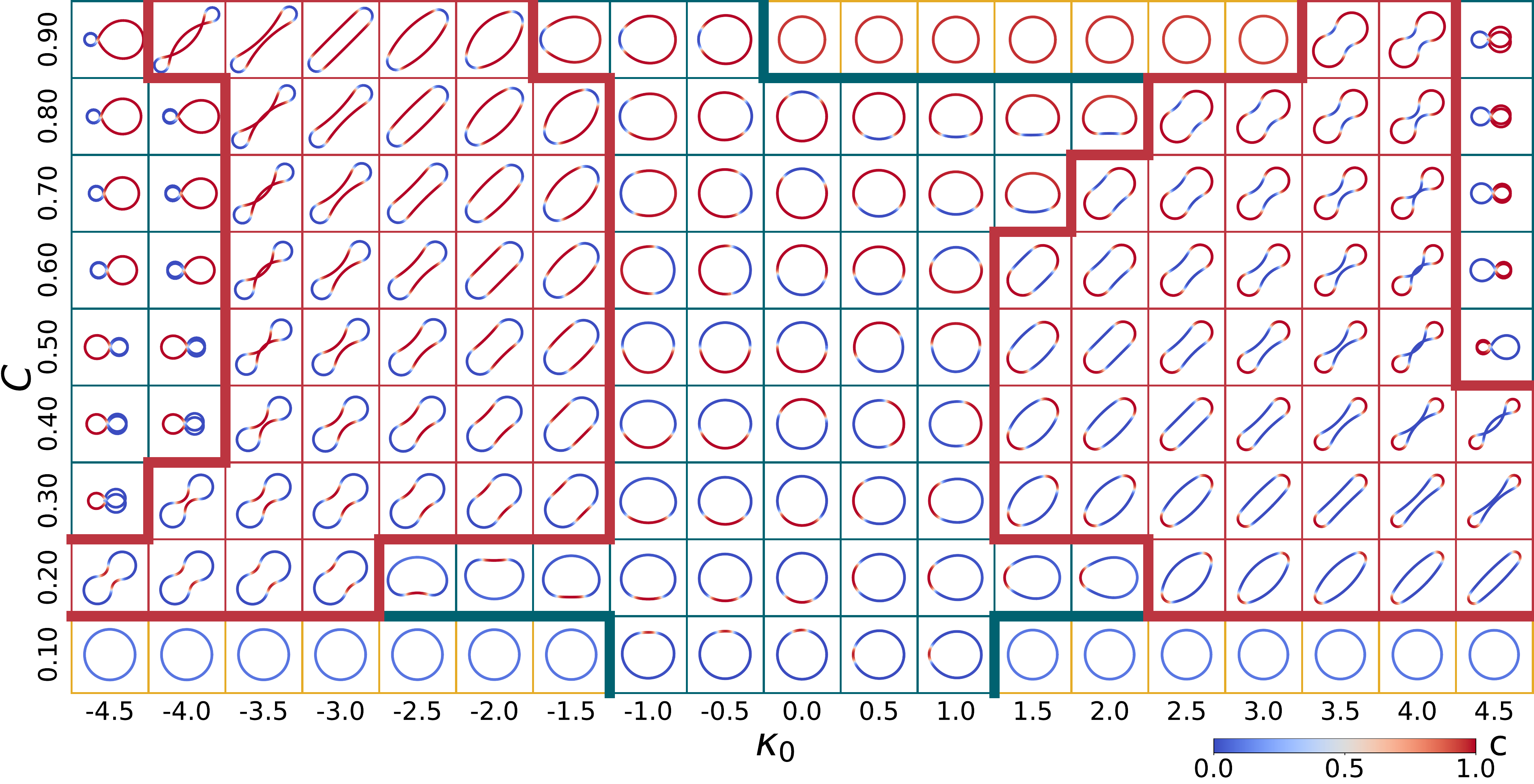}
        \caption{$\alpha=64$}
        \label{fig:phase_a2}
    \end{subfigure}
    \\
    \begin{subfigure}[t]{0.75\linewidth}
        \centering
        \includegraphics[width=\linewidth]{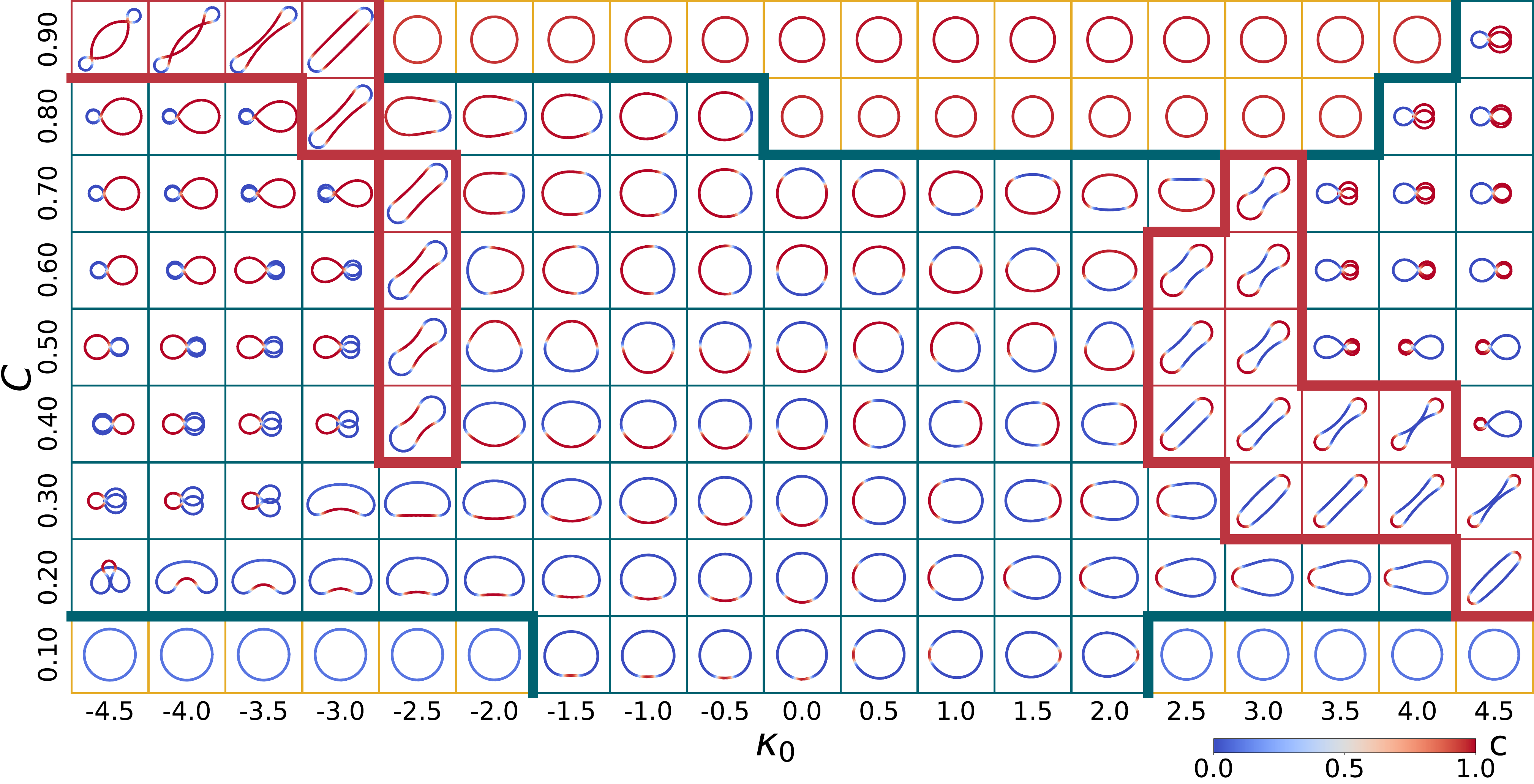}
        \caption{$\alpha=256$}
        \label{fig:phase_a3}
    \end{subfigure}  
    \caption{Global-minimizer morphology maps in the $(C,\kappa_0)$ plane for (a--d) $\alpha=1,64,256$ (other parameters fixed as in Fig.~\ref{fig: metastables}). Each cell displays the reconstructed minimum-energy configuration, colored by the concentration field $c$. Thick lines delineate the globally optimal morphologies classified by interface number: $N=0$ (shaded yellow), $N=2$ (green line), and $N=4$ (red line).}
    \label{fig:phase_diagrams_k0C}
\end{figure*}
Conversely, when $C>C^*$, the composition will provide more spontaneous curvature than is compatible with the fixed total turning. The system must then offset this by under-curving the dense phase and/or by having negative curvature in the dilute one.  

Notably, the bending-energy fraction exhibits a sharp jump across the $N=2\leftrightarrow N=4$ boundary (dashed line), providing an energetic signature of the morphology switch. This reflects an exchange of optimality between two distinct branches: an $N=2$ branch that saves interfacial cost but pays substantial curvature mismatch, and an $N=4$ branch that introduces additional interfaces to strongly reduce bending frustration.

\subsubsection{Phase diagrams on the \texorpdfstring{$(\kappa_0, C)$}{(kappa_0, C)} plane.}

Fig.~\ref{fig:phase_diagrams_k0C} maps the globally minimizing morphology in the $(C,\kappa_0)$ plane for several values of $\alpha$ (other parameters fixed as in Fig.~\ref{fig: metastables}). As $\alpha$ increases, the region where multi-domain states are globally optimal is progressively reduced, consistent with the higher energetic cost of additional interfaces. For $\kappa_0>0$, varying $\kappa_0$ primarily shifts the curvature-matching composition $C^*=1/\kappa_0$; this moves the location of the low-mismatch corridor where the turning constraint can be accommodated with little curvature--composition frustration. Increasing $|\kappa_0|$ enhances the energetic value of converting compositional contrast into curvature changes, and thereby promotes symmetry-breaking, phase-separated morphologies over uniform states.

The phase diagrams are generically not symmetric under $\kappa_0\to-\kappa_0$. This asymmetry is physically expected: the turning constraint fixes $\int_\Gamma \kappa\,ds=2\pi m>0$, so negative spontaneous curvature competes with the required positive turning. Moreover, because $c\in[0,1]$, the coupling term $\kappa_0 c$ provides bounded curvature control and does not allow a sign-symmetric redistribution of curvature between phases. In particular, the matching condition $C^*=m/\kappa_0$ lies outside the admissible range when $\kappa_0<0$, so curvature mismatch cannot be removed by tuning $C$ on the negative-$\kappa_0$ side. This factor reshapes the set of competing minima and can increase metastability.

\FloatBarrier
Because the present model does not include self-avoidance or contact forces, the reconstructed planar embedding may self-intersect in regimes of sufficiently strong curvature forcing (large $|\kappa_0|$). In
Fig.~\ref{fig:phase_diagrams_k0C} such solutions appear as overlapping `figure-8' shapes. For the $N=4$ branch in the sharp-interface limit, an explicit criterion for the onset of such curve-crossing can be derived from a symmetric four-arc construction.

This criterion successfully identifies the onset of overlap observed in the corresponding region of parameter space. For $N=2$ we also observe self-intersection numerically, but do not have a comparably simple closed-form condition.

\subsection{Coarsening dynamics and metastability}

\begin{figure*}
  \centering
  \begin{subfigure}[t]{0.45\textwidth}
    \centering
    \includegraphics[width=\textwidth]{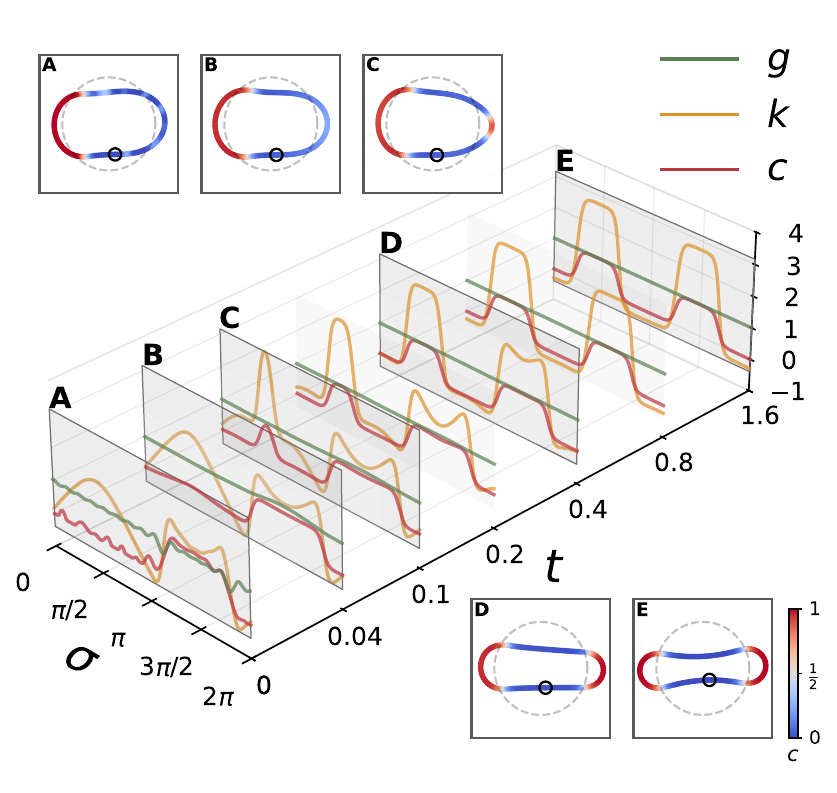}
       \label{fig:PDE_solution_left}
  \end{subfigure}
  \hfill 
  \begin{subfigure}[t]{0.45\textwidth}
    \centering
    \includegraphics[width=\textwidth]{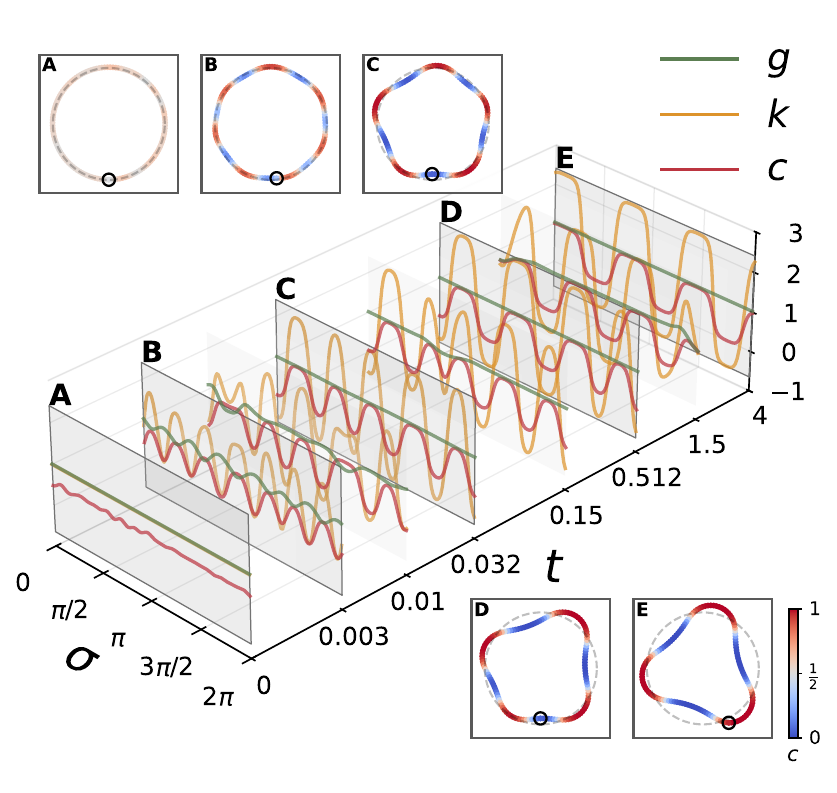}
    \label{fig:PDE_solution_right}
  \end{subfigure}
  \\[0.0em]
  \begin{subfigure}[b]{0.49\textwidth}
      \includegraphics[width=1\textwidth]{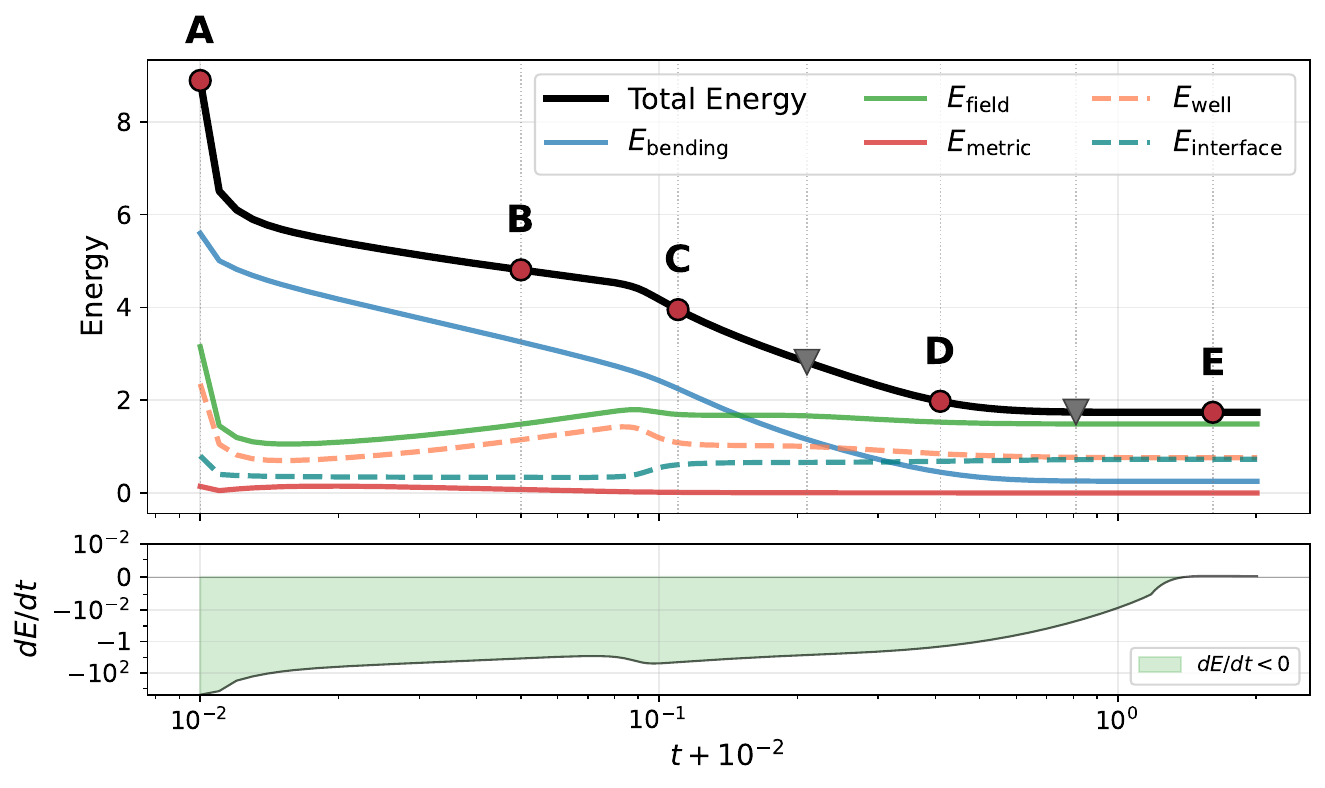}
      \label{fig:energy_component_evolution_left}
  \end{subfigure}
    \hfill 
  \begin{subfigure}[b]{0.49\textwidth}
      \centering
      \includegraphics[width=1\textwidth]{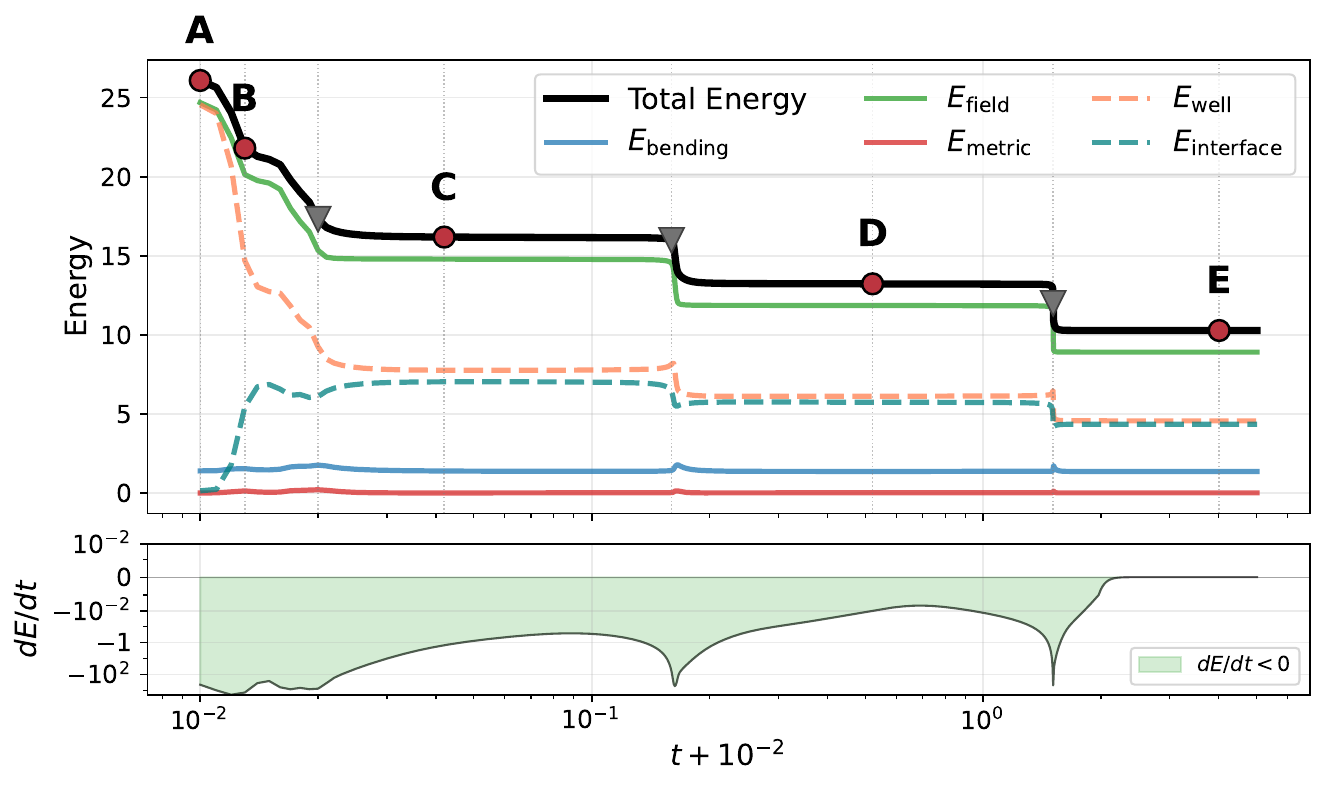}
      \label{fig:energy_component_evolution_right}
  \end{subfigure}
  \caption{
  Time-dependent gradient flow \eqref{eq:evolution_gkc} \eqref{eq:vt_vn}pathways illustrated by waterfall plots. Waterfall plots show the evolution of the concentration $c$ (red), curvature $\kappa$ (orange), and metric $g$ (green) along the material coordinate $\sigma$ at a sequence of times $t$ under the gradient flow simulations of \eqref{eq:evolution_gkc}-\eqref{eq:vt_vn}. Insets (A-E) display the corresponding reconstructed loop shapes, color-coded by $c$; the gray dashed circle indicates the relaxed reference shape without coupling, and a small black circle indicates the position $\sigma=0$.  Parameter data and initialization details (involving a small perturbation to the concentration field $c$) are given in Appendix \ref{secapp: numerical}. {\em Left:} Starting from an acorn shape ($N=2$), the evolution to a peanut-shape ($N=4$) morphology is shown. {\em Right:} With starting conditions close to circular state of uniform composition ($N=0$), the evolution to a polygonal ($N=6$) multi-domain morphology is shown. Domains merge at around $t=0.01, 0.15, 1.5$. Merge events take longer once there are fewer domains.}
    \label{fig:PDE_solution}
\end{figure*}
Time-dependent simulations of the gradient flow provide an independent route to the same equilibria as found by direct minimization, and reveal their basins of attraction. In our deterministic (noiseless) setting, metastable morphologies correspond to genuine local minima: they persist under the dynamics and are reached robustly from a range of initial conditions rather than representing incomplete relaxation. We have used the fully dynamics to check qualitative features of the phase diagrams presented earlier, but their computational cost prevents us doing so for more than a sparsely sampled set of phase points. 

In several parameter regimes we observe long-lived multi-domain states ($N\ge 4$) that persist over the longest simulated times, indicating strong metastability with finite energy barriers between competing minima. Such barriers do not arise for phase ordering on a rigid circular filament because there is in that case always a small attractive force between neighboring interfaces. This means that any stationary state with $N>2$ is locally unstable to perturbations that displace these interfaces. To understand why curvature-coupling breaks this condition so that coarsening can stall, we here analyze representative merging events via the time evolution of the total energy and its decomposition into bending, stretching, and composition-field/interfacial contributions. 

Figure \ref{fig:PDE_solution} shows two representative coarsening/merging episodes together with the evolution of the total energy and its main components. Merging reduces interfacial cost, producing a clear drop in the interfacial contribution, but it can transiently increase curvature mismatch and/or metric strain, leading to compensating rises or overshoots in bending/metric energies. Whether merging proceeds is therefore decided by a trade-off: the interfacial energy gain must outweigh the elastic penalties not just in the final state, but progressively at all points along the pathway. Whenever this does not hold, the system becomes trapped at a metastable fixed point.

The two lower panels in Fig.~\ref{fig:PDE_solution} show that the total free energy decreases monotonically throughout the time evolution, confirming that the numerical dynamics is accurately dissipative as required by the gradient flow structure of the governing equations. Moreover, the decomposition of the energy into its various contributions reveals a clear two-stage relaxation mechanism, comparable to that seen for coarsening in a fixed Euclidian domain.
At early times the system rapidly reduces the energy associated with the field sector (notably $E_{\rm field}$ which includes its bulk double-well part $E_{\rm well}$). This corresponding to fast early-stage phase separation toward nearly pure high- and low-density regions. This stage is marked by a steep initial drop in $E$. At later times, once the bulk phases are established, relaxation becomes controlled by coupled interface-geometry reorganization: the remaining energy decreases mainly through changes in the interfacial contribution $E_{\rm interface}$ together with compensating adjustments in the geometric energies $E_{\rm bending}$ (and, when extensibility matters, $E_{\rm metric}$). These features are caused by interfacial merger (or creation) events which can be separated by periods of slower evolution. 

Crucially, the late-time dynamics is no longer a purely local diffusive coarsening process as in Cahn-Hilliard on a fixed 1D domain; instead it requires nonlocal shape readjustments of the closed filament that redistribute curvature while maintaining the global geometric constraints.
Indeed, translational motion of an interface generally requires a loop-wide shape readjustment to maintain the closure and turning constraints while managing the curvature--composition mismatch. This nonlocal coupling effectively increases dynamical resistance to coarsening; it can strongly slow interfacial mergers or, when finite barriers arise, stop them entirely. In the former case, relaxation typically proceeds through long quasi-steady plateaus separated by discrete reconfiguration events.

The right-hand panel in Fig. \ref{fig:PDE_solution} provides direct evidence for such kinetically hindered coarsening. The total energy exhibits a sequence of step-like drops, each accompanied by a sharp negative spike in $d E / d t$  (interface mergers). Each merger is driven by the reduction of interfacial cost but it simultaneously incurs geometric penalties. The global geometric coupling can transiently raise $E_{\rm bending}$ and/or $E_{\rm metric}$ during a merger, reflecting a pathway-level trade-off between interfacial gain and elastic cost. As coarsening progresses and fewer interfaces remain, further mergers require increasingly global rearrangements, consistent with the growing waiting times between successive steps. The left-hand panel shows a different regime, involving the evolution from an unstable $N=2$ state to a stable $N=4$ morphology. Here the total energy decreases smoothly, although its components are non-monotonic, until all the energies saturate at plateau values, indicating approach to a stable fixed point of the dynamics.

\section{CONCLUSION}  \label{sec: outlook}
In this work we have studied a closed planar elastic filament carrying a conserved phase separating field, coupled to bending and stretching elasticity through a composition-dependent spontaneous curvature. The central physical finding is that global closure and topology fundamentally influence phase separation by enforcing nonlocal geometric constraints: on a closed loop, even a small displacement of an interface between coexisting phases generally requires a global readjustment of the curvature distribution to maintain closure and the fixed turning number. This `closure-induced curvature frustration' qualitatively changes the free energy landscape compared with phase separation on either a fixed periodic domain or an open filament, where coarsening generically proceeds toward a state with either two or one interfaces. In contrast, the closed, deformable setting admits distinct stable and genuinely metastable morphologies comprising persistent multi-domain states, including the minimal closure-compatible $N=4$ family. These exist alongside uniform ($N=0$) and two interface ($N=2$) branches and exchange global optimality via sharp energy crossings, at first-order morphological transitions. We further identify a natural `matching' coverage $C^* \sim m / \kappa_0$ at which the required total turning angle matches the local preference $\kappa \approx \kappa_0 c$, producing a pronounced minimum in curvature-composition mismatch (and thus bending cost). Overall, our results show that closure acts as an effective long-range interaction between phase boundaries, generating robust morphology selection and geometrically obstructed coarsening in an otherwise unfrustrated gradient flow system.

We believe our approach to equilibrium phase separation on a closed elastic filament can be extended in a systematic way to address active field theories. In such an extension, the scalar field $c$ would retain its role as a conserved order parameter but would be interpreted more broadly ({\em e.g.}, as a local occupancy, connectivity, or remodeling state of an interfacial scaffold). Activity can then be introduced by allowing (i) nonequilibrium currents as in Active Model B type coarse-grained descriptions \cite{wittkowski2014scalar, nardini2017entropy} and/or (ii) source-  sink terms or reaction processes that break detailed balance and prevent relaxation to a free-energy minimum \cite{GoychukFrey2019, gao2026self}. Coupling these active field dynamics to interfacial mechanics would provide a principled route to capturing (evolving or time-independent) nonequilibrium behaviors on closed geometries, and we hope to address these aspects in future studies. We anticipate that the our development of robust numerical methods for the full dynamics will prove crucial in this context where the alternative path of direct free energy minimization does not exist.

\section{ACKNOWLEDGMENTS}
HW is grateful to Darryl Holm and Boris Khesin for insightful discussions that helped shape this study. We also thank Mingnan Ding, Mingrui Ding, Antonio Filieri, Nir Gov, Kaibo Hu, Ruiao Hu, Xiaotong Ji, Ran Jing, Buyang Li, Rui Ma, Balazs Nemeth, Yitong Sun, Mohamed Warda and Antonia Winter for helpful discussions.

\section*{DATA AVAILABILITY STATEMENT}
The data that support the findings of this study are available from the corresponding author upon reasonable request. 

\bibliography{reference}

\onecolumngrid  

\newpage

\appendix
\newpage

\section{Variational Analysis}\label{secapp: variation}
\subsection{Variation of Geometry}
For a planar curve parameterized by arc length $s$, the Frenet frame $(\mathbf{T}, \mathbf{N})$ satisfies
\begin{equation}
\partial_s \mathbf{T}=\kappa \mathbf{N}, \quad \partial_s \mathbf{N}=-\kappa \mathbf{T},
\end{equation}
and a small deformation is written as
\begin{equation}
\delta \mathbf{x}=\xi \mathbf{T}+\eta \mathbf{N} .
\end{equation}
We choose the normal orientation such that $\eta>0$ corresponds to an inward normal perturbation, consistent with $\partial_s \mathbf{T}=\kappa \mathbf{N}$ and $\partial_s \mathbf{x} = \mathbf{T}$.\\

{\bf Variation of the metric:}

Considering $g=\left|\partial_\sigma \mathbf{x}\right|^2$ and  $\partial_s=g^{-1 / 2} \partial_\sigma$,  we have
\begin{equation}
\delta g
=2 \partial_\sigma \mathbf{x} \cdot \partial_\sigma(\delta \mathbf{x})
=2(\sqrt{g} \mathbf{T}) \cdot \sqrt{g}\left[\left(\partial_s \xi-\kappa \eta\right) \mathbf{T}+\left(\partial_s \eta+\kappa \xi\right) \mathbf{N}\right] 
=2 g\left(\partial_s \xi-\kappa \eta\right).
\end{equation}
Therefore, 
\begin{equation}
 \delta h = \delta \sqrt{g}=h\left(\partial_s \xi-\kappa \eta\right),\quad \delta\left(\partial_s\right)=-\left(\partial_s \xi-\kappa \eta\right) \partial_s, \quad
 \delta d s=\left(\partial_s \xi-\kappa \eta\right) d s\,.
\end{equation}

{\bf Variation of curvature:}

To obtain the variation of curvature $\delta \kappa$, we use $\partial_s \mathbf{T}=\kappa \mathbf{N}$:
\begin{equation} \label{eq: delta_k_prepare}
\delta\left(\partial_s \mathbf{T}\right)
=\delta\left(\partial_s\right) \mathbf{T}+\partial_s(\delta \mathbf{T})
=(\delta \kappa) \mathbf{N}+\kappa \delta \mathbf{N}.
\end{equation}
First, we compute the variation of vectors $ \delta \mathbf{T}$ and obtain the $ \delta \mathbf{N}$ by orthonormality:
\begin{equation}
\begin{aligned}
    \delta \mathbf{T}&=\delta\left(\partial_s\right) \mathbf{x}+\partial_s(\delta \mathbf{x}) = \left(\partial_s \eta+\kappa \xi\right) \mathbf{N}\\
    \delta \mathbf{N}&=-\left(\partial_s \eta+\kappa \xi\right) \mathbf{T} \,.
\end{aligned}    
\end{equation}
Next, we compute the other terms in \eqref{eq: delta_k_prepare} by using $\mathbf{T}$, 
\begin{equation}
\begin{aligned}
\delta\left(\partial_s\right) \mathbf{T}&=-\left(\partial_s \xi-\kappa \eta\right) \partial_s \mathbf{T}=-\left(\partial_s \xi-\kappa \eta\right) \kappa \mathbf{N}, \\
\partial_s(\delta \mathbf{T})&=\partial_s\left[\left(\partial_s \eta+\kappa \xi\right) \mathbf{N}\right]=\left(\partial_s^2 \eta+\partial_s(\kappa \xi)\right) \mathbf{N}-\left(\partial_s \eta+\kappa \xi\right) \kappa \mathbf{T} \,.
\end{aligned}
\end{equation}
Comparing the $\mathbf{N}$-components then yields
\begin{equation}
\delta \kappa=\partial_s^2 \eta+\kappa^2 \eta+\kappa_s \xi\,.
\end{equation}
These variation results are confirmed to be consistent with the kinematic form 
\begin{equation}
\begin{aligned}
& \partial_t g=2 g\left(\partial_s v_t-\kappa v_n\right) \\
& \partial_t k=\left(\partial_s^2+\kappa^2\right) v_n+\left(\partial_s \kappa\right) v_t
\end{aligned}
\end{equation}
by substituting $(\xi, \eta) \leftrightarrow\left(v_t, v_n\right)$. It is convenient to use $h=\sqrt{g}$ to avoid writing square roots, so that 
\begin{equation}
    \partial_t h=h\left(\partial_s v_t-\kappa v_n\right).
\end{equation}

\subsection{Compositional Variation and Cahn-Hilliard Flow}

We consider the nondimensional energy (see Appendix \ref{secapp: nondimension})
\begin{equation}
\mathcal{E}[\mathbf{x},c]
=\frac{1}{2}\int_\Gamma (\kappa-\kappa_0 c)^2\,ds
+\frac{\beta}{2}\int_{S^1}(h-h_0)^2\,d\sigma
+\alpha\int_\Gamma\Big[\,W(c)+\frac{\varepsilon^2}{2}\,c_s^2\,\Big]\,ds,
\label{eq:energy_full_nd_appendix}
\end{equation}
where $ds=h\,d\sigma$ and $F:=\kappa-\kappa_0 c$. We decompose the first variation as
\begin{equation}
    \delta\mathcal{E}
=\int_\Gamma\big(F_t\,\xi+F_n\,\eta\big)\,ds
+\int_\Gamma G^{(c)}\,\delta c\,ds,
\end{equation}
so that the gradient flow for the curve reads $v_t=-F_t$, $v_n=-F_n$ in the non-dimensionalized overdamping limit. Combining all tangential and normal parts found below in \eqref{eq:stretch_gradients}, \eqref{eq: bend_gradient}, and \eqref{eq: field_gradient}, we have the total dynamics in the main text \eqref{eq:evolution_gkc}.\\

{\bf Geometric variations of composition:}

Using $\delta h=h(\partial_s\xi-\kappa\eta)$, $\delta ds=(\partial_s\xi-\kappa\eta)\,ds$, and
$\delta\kappa=\partial_s^2\eta+\kappa^2\eta+\kappa_s\xi$, we can get the variation of the field $c$. Assume particles are perfectly bound to the filament and do not pass through material markers. Equivalently, the particle number carried by each material interval $\left[\sigma_1, \sigma_2\right]$ is invariant under a virtual deformation. Hence
\begin{equation}
    \delta\left(\int_{\sigma_1}^{\sigma_2} \rho_L(\sigma) d \sigma\right)=0 \text { for all } \sigma_1, \sigma_2\,.
\end{equation}
Since the interval is arbitrary, this implies the pointwise constraint
\begin{equation}
    \delta \rho_L(\sigma)=0\,.
\end{equation}
Therefore, varying $c=\rho_L /\left(\rho_0 h\right)$ at fixed $\sigma$ gives
\begin{equation}
    \delta c=\frac{1}{\rho_0} \delta\left(\frac{\rho_L}{h}\right)=\frac{1}{\rho_0}\left(\frac{\delta \rho_L}{h}-\frac{\rho_L}{h^2} \delta h\right)=-\frac{\rho_L}{\rho_0 h} \frac{\delta h}{h}=-c \frac{\delta h}{h}= -c(\partial_s\xi-\kappa\eta).
\end{equation}

{\bf Stretching energy:}

Let $d\sigma$ be a fixed reference parameter and $ds=h\,d\sigma$ the current arclength element, with a given reference density $h_0(\sigma)$.
Consider
\begin{equation}
    \mathcal{E}_g=\frac{\beta}{2}\int_{S^1}(h-h_0)^2\,d\sigma,
\end{equation}
using $\delta h = h(\partial_s\xi-\kappa\eta)$ and integrating by parts in the $\xi$-term (on a closed curve) gives:
\begin{equation}
    \delta \mathcal{E}_g
= \beta\int_{S^1}\,(h-h_0)\,\delta h\,d\sigma
= \beta\int_\Gamma \,(h-h_0)\,(\partial_s\xi-\kappa\eta)\,ds
= \beta\int_\Gamma \Big[-\partial_s\!(h-h_0)\,\xi
-\,\kappa\,(h-h_0)\,\eta\Big]\,ds,
\end{equation}
hence the stretching shape gradients are
\begin{equation}
F_{t, \rm str}=-\partial_s\!\big[\beta(h-h_0)\big],
\qquad
F_{n, \rm str}=-\,\beta\,\kappa\,(h-h_0).
\label{eq:stretch_gradients}
\end{equation}

{\bf Bending energy:}

Let $G:=\kappa-\kappa_0c$ with preferred spontaneous curvature $\kappa_0$. Consider
\begin{equation}
    \mathcal{E}_{\rm bend}=\frac{1}{2}\int_\Gamma (\kappa-\kappa_0c)^2\,ds\,, \quad \delta \mathcal{E}_{\rm bend} =\int_\Gamma G\,(\delta\kappa-\kappa_0\delta c)\,ds+\frac{1}{2}\int_\Gamma G^2\,\delta ds\,.
\end{equation}
Using $\delta h=h(\partial_s\xi-\kappa\eta)$, $\delta\kappa=\partial_s^2\eta+\kappa^2\eta+\kappa_s\xi$, and $\delta c=-c\left(\xi_s-\kappa \eta\right)$, we get the tangential part $\xi$-terms and normal part $\eta$-terms,
\begin{equation}
\begin{aligned}
    \delta\mathcal{E}_{\rm bend}\big|_{\xi} &= \int_\Gamma \big(G\kappa_s+G\kappa_0 c \partial_s+\tfrac{1}{2}G^2\partial_s\big)\,\xi\,ds
=\int_\Gamma [G(\kappa_s-\kappa_0c_s-G_s)-\kappa_0 c G_s]\,\xi\,ds.\\
\delta\mathcal{E}_{\rm bend}\big|_{\eta} &= \int_\Gamma \Big[G_{ss}+\kappa^2G-G\kappa \kappa_0 c-\tfrac{1}{2}\kappa G^2\Big]\eta\,ds
\end{aligned}
\end{equation}
The tangential and normal bending shape gradients follow as:
\begin{equation}\label{eq: bend_gradient}
    \begin{aligned}
        F_{t, \rm bend}&=-\kappa_0 c G_s =-\kappa_0 c \partial_s(\kappa-\kappa_0c)\,,\\
        F_{n, \rm bend}&=G_{s s}+\frac{1}{2} \kappa G^2
=\partial_s^2\left(\kappa-\kappa_0 c\right)+\frac{1}{2} \kappa\left(\kappa-\kappa_0 c\right)^2\,.
    \end{aligned}
\end{equation}

{\bf Composition field energy:}

Considering now the field energy part, 
\begin{equation}
    \mathcal{E}_{\rm field}=\alpha \int_{\Gamma}\left[W(c)+\frac{\varepsilon^2}{2} c_s^2\right] d s \,,
\end{equation}
the variation reads
\begin{equation}
    \begin{aligned}
        \delta {{\cal E}_{{\rm{field}}}} 
 &=\alpha \int {\left[ {{W^\prime }(c) - {\varepsilon ^2}{c_{ss}}} \right]\delta cds}  + \int {\left[ {W(c) - \frac{{{\varepsilon ^2}}}{2}c_s^2} \right]} \delta ds\\
 &= \alpha\int\left(-c \mu_{\text {field }}+f-\varepsilon^2 c_s^2\right)\left(\partial_s \xi-\kappa \eta\right) d s\\
 &= -\alpha \int_{\Gamma}\mathcal{P}\left(\partial_s \xi-\kappa \eta\right) d s\,,
    \end{aligned}
\end{equation}
where $f:=W+\frac{\varepsilon^2}{2} c_s^2$ and $\mu_{\rm field}:=\frac{1}{\alpha} \frac{\delta \mathcal{E}_{\rm field}}{ \delta c}=W^{\prime}(c)-\varepsilon^2 c_{s s}$, and the \textbf{chemical stress} as the line-tension density:
\begin{equation}
    \mathcal{P}:=c \mu_{\rm field} - f +\varepsilon^2 c_s^2\; .
\end{equation}
Then the tangential and normal bending shape gradient are 
\begin{equation}\label{eq: field_gradient}
    F_{t, {\rm field}}=\alpha \partial_s \mathcal{P}, \quad F_{n, {\rm field}}=\alpha\kappa\mathcal{P}
\end{equation}

{\bf Compositional flux:}

In the absence of sources and sinks, particle number conservation on the fixed material domain
implies the Lagrangian continuity equation with the definition in \eqref{eq: rho_L} and \eqref{eq:def_c}:
\begin{equation}
\partial_t \rho_L + \partial_\sigma J_L = 0,
\label{eq:lagrangian_continuity_app}
\end{equation}
where $J_L(\sigma,t)$ is the particle number flux across a material point (units: \# per time).
We now introduce the normalized flux
\begin{equation}
    J:=\frac{J_L}{\rho_0}\,.
\end{equation}
Using $c=\rho_L/(\rho_0 h)$ and \eqref{eq:lagrangian_continuity_app},
\begin{equation}
    \partial_t c=\frac{1}{\rho_0}\partial_t\!\left(\frac{\rho_L}{h}\right)
=\frac{1}{\rho_0}\left(\frac{\partial_t\rho_{L}}{h}-\frac{\rho_L}{h^2}h_t\right)
=-\frac{1}{h}\partial_\sigma J - c\,\frac{h_t}{h}.
\end{equation}
Since $h_t/h=\partial_s v_t-\kappa v_n$ and $\partial_s=h^{-1}\partial_\sigma$, this becomes
\begin{equation}
\partial_t c = -\,\partial_s J \;-\; c\big(\partial_s v_t-\kappa v_n\big).
\label{eq:c_kinematic_flux_app}
\end{equation}
We close \eqref{eq:c_kinematic_flux_app} by prescribing a Cahn--Hilliard flux along the physical
arclength coordinate:
\begin{equation}
J=-M\,\partial_s\mu,
\qquad
\mu=\frac{\delta\mathcal{E}}{\delta c}
=\alpha\big(W'(c)-\varepsilon^2 c_{ss}\big)-\kappa_0\big(\kappa-\kappa_0 c\big),
\label{eq:CH_flux_app}
\end{equation}
so that
\begin{equation}
\partial_t c = -c\big(\partial_s v_t-\kappa v_n\big) + M\,\partial_s^2\mu.
\label{eq:c_evolution_app}
\end{equation}

With the time scale $T_0=\zeta L_0^4/\beta_\kappa$ used in Appendix~\ref{secapp: nondimension}, the
dimensionless mobility is $\tilde M:=M\zeta$, and the dimensionless flux law reads
\begin{equation}
    \tilde J=-\tilde M\,\partial_{\tilde s}\tilde\mu.
\end{equation}
In the main text we drop tildes; in particular, $M$ denotes the dimensionless mobility and $J$
the dimensionless normalized flux.

\section{Non-dimensionalization of Energy and Equations of Motion}\label{secapp: nondimension}

We start from the dimensional total energy of a closed elastic loop $\Gamma(t)\subset\mathbb{R}^2$ coupled to a scalar composition field $c$:
\begin{equation}
E[\mathbf{x},c]
=
\frac{\beta_{\kappa}}{2}\int_\Gamma (\kappa - \kappa_0 c)^2\,ds
+\frac{\beta}{2}\int_{S^1} (h-h_0)^2\,d\sigma
+\alpha \int_\Gamma \left[\, W(c) + \frac{\varepsilon^2}{2}\,|\partial_s c|^2 \right] ds .
\label{eq:dim_energy}
\end{equation}
Here $\mathbf{x}(\sigma)$, $\sigma\in S^1$, parameterizes the loop, with local stretch
$h(\sigma)=|\partial_\sigma \mathbf{x}(\sigma)|$ and physical arclength element $ds=h\,d\sigma$.
The curvature of the loop is denoted by $\kappa$, and $\partial_s$ is differentiation with respect to arclength. The field $c=c(s)$ is a dimensionless occupation (or composition) variable, normalized by a reference density, so that $0\le c\le 1$. 
\[
[\mathbf{x}] = \mathrm{length},\qquad
[\sigma]=1,\qquad
[s]=\mathrm{length},\qquad
[h]=\mathrm{length},\qquad
[\kappa]=\mathrm{length}^{-1},\qquad
[c]=1.
\]
The coefficients $\beta_{\kappa}$, $\beta$, and $\alpha$ represent the bending stiffness, stretching
stiffness, and mixing-energy scale, respectively. Their physical dimensions are
\[
[\beta_{\kappa}]=\mathrm{energy}\cdot \mathrm{length},\qquad
[\beta]=\mathrm{energy}/\mathrm{length}^2,\qquad
[\alpha]=\mathrm{energy}/\mathrm{length},\qquad
[\varepsilon]=\mathrm{length},\qquad
[\kappa_0]=\mathrm{length}^{-1}.
\]

\textbf{Choice of characteristic scales:}

To simplify the formulation and identify the essential control parameters, we introduce the
characteristic scales
\begin{enumerate}
    \item \emph{Length:} the reference total length
$L_0 := \frac{1}{2\pi}\int_{S^1} h_0(\sigma)\,d\sigma, $
    \item \emph{Energy:} the bending energy scale
    $E_0 := {\beta_{\kappa}}/{L_0},$
    \item \emph{Time:} the overdamped relaxation time 
    $T_0 := {\zeta L_0^4}/{\beta_{\kappa}}.$
\end{enumerate} 
We define dimensionless variables (denoted by tildes) by
\begin{equation}
\mathbf{x}=L_0\tilde{\mathbf{x}},\qquad
s=L_0\tilde{s},\qquad
h=L_0\tilde{h},\qquad
h_0=L_0\tilde{h}_0,\qquad
\kappa=\frac{\tilde{\kappa}}{L_0},\qquad
\kappa_0=\frac{\tilde{\kappa}_0}{L_0},\qquad
\varepsilon=L_0\tilde{\varepsilon}.
\label{eq:nd_scaling}
\end{equation}
Consequently,
\[
ds =hd\sigma=L_0\tilde{h}d\sigma= L_0\,d\tilde{s},\qquad
\partial_s = \frac{1}{L_0 \tilde{h}}\partial_{\sigma}=\frac{1}{L_0}\partial_{\tilde{s}}.
\]
Let $\tilde{\Gamma}:=\Gamma/L_0$ denote the rescaled dimensionless loop. Substituting
\eqref{eq:nd_scaling} into~\eqref{eq:dim_energy} and dividing by $E_0=\beta_{\kappa}/L_0$ yields the
dimensionless energy $\tilde{\mathcal{E}}:=E/E_0$:
\begin{equation}
\tilde{\mathcal{E}}[\tilde{\mathbf{x}},c]
=
\frac{1}{2}\int_{\tilde{\Gamma}} (\tilde{\kappa}-\tilde{\kappa}_0\,c)^2\,d\tilde{s}
+\frac{1}{2}\frac{\beta L_0^3}{\beta_{\kappa}}\int_{S^1} (\tilde{h}-\tilde{h}_0)^2\,d\sigma
+\frac{\alpha L_0^2}{\beta_{\kappa}}\int_{\tilde{\Gamma}}
\left[\, W(c) + \frac{\tilde{\varepsilon}^2}{2}\,|\tilde{h}^{-1}\partial_{\sigma} c|^2 \right] d\tilde{s}.
\label{eq:nd_energy_full}
\end{equation}
or 
\begin{equation}
\tilde{\mathcal{E}}[\tilde{\mathbf{x}},c]
=\frac{1}{2}\int_{S^1} (\tilde{\kappa}-\tilde{\kappa}_0\,c)^2\tilde{h}\,d\sigma
+\frac{\tilde{\beta}}{2}\int_{S^1} (\tilde{h}-\tilde{h}_0)^2\,d\sigma
+\tilde{\alpha}\int_{S^1}
\left[\, W(c) + \frac{\tilde{\varepsilon}^2}{2}\,|\tilde{h}^{-1}\partial_{\sigma} c|^2 \right] \tilde{h}d\sigma.
\label{eq:nd_energy}
\end{equation}
with the dimensionless parameters defined by
\begin{equation}
\tilde{\alpha} := \frac{\alpha L_0^2}{\beta_{\kappa}},\qquad
\tilde{\beta} := \frac{\beta L_0^3}{\beta_{\kappa}},\qquad
\tilde{\varepsilon} := \frac{\varepsilon}{L_0},\qquad
\tilde{\kappa}_0 := \kappa_0 L_0.
\label{eq:nd_params}
\end{equation}

\textbf{Reference stretch:}

The prescribed reference metric $h_0(\sigma)$ encodes the natural local stretch of the material.
In general, $h_0$ may vary along the loop and should be retained as a given function in the
dimensionless stretching energy. In the special case of a uniform reference configuration,
one may choose $L_0=\tfrac{1}{2\pi}\int_{S^1}h_0\,d\sigma$ and reparametrize $\sigma\in[0,2\pi)$ so that the
reference stretch becomes constant. With the present scaling \eqref{eq:nd_scaling}, this yields
$
\tilde{h}_0(\sigma)\equiv 1
$, which we adopt unless stated otherwise.\\

\textbf{Global constraints:}

The closed-loop geometry and the conserved composition impose two global constraints.
The turning number constraint
$
\int_{\Gamma}\kappa\,ds = 2\pi m
$
is invariant under rescaling, hence $\int_{\tilde{\Gamma}}\tilde{\kappa}\,d\tilde{s} = 2\pi m$.
Conservation of total composition gives
$C_0 := \int_{\Gamma(t)} c\,ds$ (independent of $t$)
and we define the dimensionless conserved coverage
\begin{equation}
C=\frac{1}{2 \pi} \int_{S^1} c(\sigma, t) \tilde{h}(\sigma, t) d \sigma\,.
\label{eq:mass_nd}
\end{equation}
After nondimensionalization, the governing system is parameterized by the four independent
dimensionless groups and the two conserved and topological quantities,
$\{\tilde{\alpha},\,\tilde{\beta},\,\tilde{\varepsilon},\,\tilde{\kappa}_0;\; m,\,{C}\}$.
Here $\tilde{\alpha}$ controls the mixing strength, $\tilde{\beta}$ the stretching stiffness,
$\tilde{\varepsilon}$ the interfacial width, and $\tilde{\kappa}_0$ the curvature--composition coupling,
while $m$ fixing the turning number and ${C}$ fixing the conserved total composition.\\

{\bf Time scale and dynamical rescalings:}

In addition to the length scale $L_0$ and energy scale $E_0=\beta_\kappa/L_0$, we introduce the
overdamped relaxation time scale
\begin{equation}
T_0 := \frac{\zeta L_0^4}{\beta_{\kappa}}.
\end{equation}
We rescale time, velocities, chemical potential, and flux by
\begin{equation}
t=T_0\tilde t,\qquad
v=\frac{L_0}{T_0}\tilde v,\qquad
\mu=\frac{\beta_{\kappa}}{L_0^2}\tilde\mu,\qquad
J=\frac{L_0}{T_0}\tilde J,
\label{eq:dyn_scalings}
\end{equation}
and note that $\partial_s = L_0^{-1}\partial_{\tilde s}$. The shape variation is defined by
\begin{equation}
\delta E
=
\int_\Gamma \Big(F_t\,\xi + F_n\,\eta\Big)\,ds
+\int_\Gamma \mu\,\delta c\,ds,
\label{eq:shape_deriv_def_app}
\end{equation}
so the force densities satisfy $[F_{t,n}] = \mathrm{energy}/\mathrm{length}^2$ and admit the scaling
\begin{equation}
F_{t,n}=\frac{\beta_\kappa}{L_0^3}\,\tilde F_{t,n}.
\label{eq:force_scaling}
\end{equation}
The overdamped dynamics $\zeta v_{t,n}=-F_{t,n}$ then becomes, using \eqref{eq:dyn_scalings} and
$T_0=\zeta L_0^4/\beta_\kappa$,
\begin{equation}
\tilde v_t=-\tilde F_t,\qquad
\tilde v_n=-\tilde F_n,
\label{eq:overdamped_nd}
\end{equation}
i.e.\ the dimensionless damping coefficient is $\tilde\zeta\equiv 1$.\\

{\bf Cahn--Hilliard flux and mobility:}

We prescribe conserved Cahn--Hilliard dynamics along the physical arclength coordinate,
\begin{equation}
J
=\frac{L_0}{T_0}\tilde J 
=-\frac{\beta_{\kappa}}{L_0^2} M\,\partial_s\mu
= -M\,\partial_s\mu.
\label{eq:flux_dim_app}
\end{equation}
Substituting \eqref{eq:dyn_scalings} and $\partial_s=L_0^{-1}\partial_{\tilde s}$ into
\eqref{eq:flux_dim_app} yields
\begin{equation}
\tilde J = - (M\zeta)\,\partial_{\tilde s}\tilde\mu.
\end{equation}
We therefore define the dimensionless mobility
\begin{equation}
\tilde M := M\zeta,
\label{eq:mobility_nd}
\end{equation}
so that the nondimensional flux law takes the standard form
\begin{equation}
\tilde J = -\tilde M\,\partial_{\tilde s}\tilde\mu.
\label{eq:flux_nd_app}
\end{equation}
\noindent
In the main text we drop all tildes for notational simplicity. 

\section{Formal Sharp Interface Analysis}\label{secapp: sharp}

In this Appendix, we consider the case of an inextensible filament ($S^1$), and examine the effects of the constraints set first by the turning number and total mass conservation, and only afterwards by closure. Doing things in this order reveals an interesting mathematical structure that allows us to present a more formal proof of some results that were argued and used heuristically in the main text. 
 
Without the closure constraint, we have the following statement concerning the energetics subject to our first two constraints:

\begin{proposition}\label{prop: optimal_kappa}
Let $\Gamma$ be an inextensible closed curve of length $L:=\int_\Gamma ds$. Fix $c\in L^2(\Gamma)$ with total mass $C_0:=\int_\Gamma c\,ds$.
Consider
$
E[\kappa;c]:=\frac12\int_\Gamma (\kappa-\kappa_0 c)^2\,ds
$
minimized over $\kappa\in L^2(\Gamma)$ subject to the turning number constraint
$
\int_\Gamma \kappa\,ds = K_0.
$
Then the unique minimizer is
$$
\kappa^*(s)=\kappa_0 c(s)+\lambda,
\qquad
\lambda=\frac{K_0-\kappa_0 C_0}{L},
$$
and the minimal energy equals
$$
E[\kappa^*;c]
=\frac12\int_\Gamma (\kappa^*-\kappa_0 c)^2\,ds
=\frac12\,L\lambda^2
=\frac{(K_0-\kappa_0 C_0)^2}{2L},
$$
which depends only on the total mass $C_0$, the total curvature $K_0$, and the length $L$, but not on the spatial distribution of $c$.
\end{proposition}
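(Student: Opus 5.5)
The plan is to reduce the problem to a projection in $L^2(\Gamma)$ by the change of variable $u:=\kappa-\kappa_0 c$. Since $c\in L^2(\Gamma)$ and $\Gamma$ has finite length, $u$ ranges over all of $L^2(\Gamma)$ as $\kappa$ does, and the map $\kappa\mapsto u$ is an affine bijection. In terms of $u$, the turning-number constraint $\int_\Gamma\kappa\,ds=K_0$ becomes the linear constraint
\[
\int_\Gamma u\,ds = K_0-\kappa_0 C_0 = L\lambda .
\]
The energy becomes $E=\tfrac12\|u\|_{L^2(\Gamma)}^2$. The problem is therefore to find the element of least norm in a closed affine hyperplane of $L^2(\Gamma)$, namely the hyperplane $\{u:\langle u,1\rangle = L\lambda\}$.

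The key step is a direct orthogonal decomposition rather than a Lagrange-multiplier argument, so that both existence and uniqueness come for free. Write any admissible $u$ as $u=\lambda+w$ with $w:=u-\lambda$. The constraint gives $\int_\Gamma w\,ds=0$, so $w$ is orthogonal to constants in $L^2(\Gamma)$. Expanding the square, the cross term $2\lambda\int_\Gamma w\,ds$ vanishes, and we obtain
\[
\tfrac12\int_\Gamma u^2\,ds=\tfrac12 L\lambda^2+\tfrac12\int_\Gamma w^2\,ds \;\ge\; \tfrac12 L\lambda^2 .
\]
Equality holds if and only if $w=0$ almost everywhere. This gives at once the minimal value $\tfrac12 L\lambda^2=(K_0-\kappa_0C_0)^2/(2L)$. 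It also identifies the unique minimizer $u^*\equiv\lambda$, that is, $\kappa^*=\kappa_0c+\lambda$. Uniqueness is understood as uniqueness in $L^2$, i.e.\ up to null sets.

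As a consistency check against the formal Lagrangian $\mathcal L$ written in the main text, I would also note the stationarity condition. Varying $\kappa$ gives $\kappa-\kappa_0c-\lambda=0$, with $\lambda$ fixed by integrating over $\Gamma$. Combined with strict convexity of $E$, this yields the same minimizer. The identity $\lambda=\bar\kappa-\kappa_0\bar c$ used in \eqref{eq: optimal_kappa} then follows by dividing $K_0-\kappa_0C_0$ by $L$.

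I do not expect a real obstacle here; the argument is elementary. The only point needing care is the functional setting. Inextensibility fixes $L$ and the measure $ds$ independently of $\kappa$, so the minimization over $\kappa$ at fixed $c$ is a genuine minimization over a fixed Hilbert space. Closure is deliberately ignored, so $\kappa$ is unconstrained beyond its integral. Finally, the independence of the minimal energy from the spatial distribution of $c$ is read off directly from the fact that the value depends only on $L$, $K_0$ and $C_0$.
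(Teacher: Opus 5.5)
Your argument is correct, and it reaches the result by a different route from the paper. The paper introduces the Lagrangian $\mathcal{L}[\kappa]=E[\kappa;c]-\lambda\left(\int_\Gamma\kappa\,ds-K_0\right)$ and sets its first variation to zero to get $\kappa-\kappa_0c=\lambda$. It then fixes $\lambda$ by integrating the constraint and evaluates the energy. It never says why this stationary point is a minimizer, let alone the unique one. That step is left implicit; it follows from strict convexity of $E$ in $\kappa$ on the affine constraint set, which is the remark you add in your consistency check.

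Your substitution $u=\kappa-\kappa_0c$, followed by the splitting $u=\lambda+w$ with $\int_\Gamma w\,ds=0$, is the $L^2$ projection onto the constants. The resulting identity $E[\kappa;c]=\tfrac12 L\lambda^2+\tfrac12\int_\Gamma(\kappa-\kappa^*)^2\,ds$ gives existence, global minimality and a.e.\ uniqueness in one line. It also quantifies the excess energy of any admissible $\kappa$ as its squared $L^2$ distance from $\kappa^*$, which the stationarity argument does not give.

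The paper's multiplier formulation, in return, identifies $\lambda$ directly as the Lagrange multiplier of the turning constraint. That is the language the main text uses for $\mathcal{L}[\kappa,c]$ and for reading $\tfrac{L}{2}\lambda^2$ as the unavoidable bending cost. Your version recovers this as well, since $\lambda=\bar\kappa-\kappa_0\bar c$ appears as the mean of $u$.
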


\begin{proof}
Introduce the Lagrangian
\[
\mathcal{L}[\kappa]
:=\frac12\int_\Gamma(\kappa-\kappa_0 c)^2\,ds
-\lambda\left(\int_\Gamma\kappa\,ds-K_0\right).
\]
For any variation $\delta\kappa\in L^2(\Gamma)$,
\[
\delta\mathcal{L}
=\int_\Gamma \big[(\kappa-\kappa_0 c)-\lambda\big]\delta\kappa\,ds.
\]
Imposing the constraint gives
\[
K_0=\int_\Gamma \kappa\,ds
=\kappa_0\int_\Gamma c\,ds+\lambda\int_\Gamma ds
=\kappa_0 C_0+\lambda L,
\]
so the Lagrange multiplier is $\lambda=(K_0-\kappa_0 C_0)/L$, and the minimizer is $\kappa^*(s)=\kappa_0 c(s)+\lambda$.
Finally, $\kappa^*-\kappa_0 c\equiv \lambda$ is constant, hence
\[
E[\kappa^*;c]=\frac12\int_\Gamma \lambda^2\,ds=\frac12\,L\lambda^2.
\]
\end{proof}

After eliminating the curvature variable, the problem has the same structure as in the work of  ~\cite{modica1987gradient} addressing the sharp-interface limit for Cahn-Hilliard theory in a fixed Euclidean space. The only additional ingredient is the bending term, whose quadratic structure ensures the required lower semicontinuity, thus in the small $\varepsilon$ limit the coupled energy of the composition field and filament bending, which we denote $E_\varepsilon$, converges (more precisely, $\Gamma$-converges~\cite{modica1987gradient}) to a sharp-interface functional whose minimizers satisfy $c\in\{0,1\}$ almost everywhere and have piecewise constant curvature
\[
\kappa=
\begin{cases}
\lambda, & c=0,\\
\lambda+\kappa_0, & c=1,
\end{cases}
\]
with smooth (formally, $C^1$) junctions across interfaces.

To now address the closure condition, we first exclude kinks so that the limit curve is a $C^1$ concatenation of circular arcs.

\begin{proposition}[absence of kinks]\label{prop:no_kink}
In the sharp-interface regime, the loop has no kinks; equivalently, the tangent is continuous along $\Gamma$.
\end{proposition}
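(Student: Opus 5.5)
The argument is by energy: a kink in the tangent costs infinite bending energy at every fixed $\varepsilon$, so one cannot survive in any finite-energy limit. The sharp-interface curve is obtained as a limit of diffuse configurations $(\kappa_\varepsilon,c_\varepsilon)$ with uniformly bounded energy $E_\varepsilon\le \mathcal{E}_{\rm sharp}+o(1)$. Bounded energy gives a uniform $L^2$ bound on $\kappa_\varepsilon$, and this bound controls the tangent angle $\theta_\varepsilon(s)=\theta_0+\int_0^s\kappa_\varepsilon\,ds'$ uniformly.

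\textbf{Step 1: uniform $L^2$ bound on curvature.} Since $0\le c_\varepsilon\le 1$ (or at least $c_\varepsilon$ is bounded in $L^2$ by the double-well term), we have
\[
\|\kappa_\varepsilon\|_{L^2}\le \|\kappa_\varepsilon-\kappa_0c_\varepsilon\|_{L^2}+|\kappa_0|\,\|c_\varepsilon\|_{L^2}\le \sqrt{2E_\varepsilon}+|\kappa_0|\,\|c_\varepsilon\|_{L^2},
\]
which is bounded uniformly in $\varepsilon$.

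\textbf{Step 2: uniform H\"older continuity of the tangent angle.} By Cauchy--Schwarz,
\[
|\theta_\varepsilon(s_2)-\theta_\varepsilon(s_1)|\le \|\kappa_\varepsilon\|_{L^2}\,|s_2-s_1|^{1/2}.
\]
So the $\theta_\varepsilon$ are equi-H\"older with exponent $1/2$. By Arzel\`a--Ascoli, together with weak $L^2$ compactness of $\kappa_\varepsilon$, a subsequence converges uniformly to some $\theta$. This limit satisfies $\theta'=\kappa\in L^2$ and is $C^{0,1/2}$, in particular continuous.

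\textbf{Step 3: no kink in the limit curve.} A kink would mean a jump $\Delta\theta\neq0$ at some point $s^*$. That is incompatible with continuity of the limit $\theta$. Quantitatively, a jump of size $\Delta\theta$ localised on a length $\ell$ forces bending energy at least $(\Delta\theta)^2/(2\ell)-O(\ell)$, which diverges as $\ell\to0$.

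\textbf{Step 4: piecewise structure and smooth junctions.} Combining continuity of $\theta$ with the piecewise-constant curvature structure from Proposition~\ref{prop: optimal_kappa}, the limit curve is a concatenation of circular arcs whose tangents match at each interface. This is the claimed $C^1$ structure.

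The main obstacle is Step 1's premise: that the relevant competitors really have bounded energy. This requires that the limit is taken along a minimizing, or energy-bounded, sequence, which is the setting of the $\Gamma$-convergence statement above. I would argue this by noting that the $N=0$ circle, or the recovery sequence for any sharp configuration, has finite energy. The remaining risk is whether one should also rule out kinks in admissible sharp competitors directly. I would handle this by defining the sharp functional's domain as $\theta\in H^1$, which is exactly the $\Gamma$-limit's domain, since an $L^2$ curvature is included in the limit energy.
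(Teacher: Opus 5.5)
Your proposal is correct and follows the paper's own proof: a uniform energy bound along minimizers gives a uniform $L^2$ bound on $\kappa_\varepsilon$, and this rules out a Dirac component in $\kappa=\partial_s\theta$, i.e.\ a jump of the tangent. Your equi-H\"older/Arzel\`a--Ascoli step only makes explicit the passage to the limit that the paper leaves implicit (one small slip: in Step 3 the correction to $(\Delta\theta)^2/(2\ell)$ is $-|\kappa_0\Delta\theta|+O(\ell)$ rather than $-O(\ell)$, which still diverges as $\ell\to0$ and so changes nothing).
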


\begin{proof}
Let $(\mathbf x_\varepsilon,c_\varepsilon)$ be minimizers and $E_\varepsilon$ is the energy with parameter $\varepsilon$. By $\Gamma$-convergence, the minimal energies are uniformly bounded:
\[
\sup_{\varepsilon\downarrow0}E_\varepsilon[\mathbf x_\varepsilon,c_\varepsilon]\le C.
\]
Since $E_\varepsilon$ controls $\frac12\int_\Gamma(\kappa_\varepsilon-\kappa_0c_\varepsilon)^2\,ds$ and $c_\varepsilon$ is bounded, we obtain a uniform $L^2$ bound on curvature,
$\sup_{\varepsilon}\int_\Gamma \kappa_\varepsilon^2\,ds<\infty$.
A kink would create a jump in the tangent angle, hence a Dirac component in $\kappa=\partial_s\theta$, contradicting $\kappa\in L^2$. Therefore no kinks occur.
\end{proof}

From the preceding two results we conclude that the minimizer (without closure but with the turning number and composition constraints) is a smooth piecewise combination of circular arcs. 

Next we see under what conditions this optimal curve could be a closed curve. Consider $N$ circular arcs with curvatures $\left\{\kappa_i\right\}$ (possibly of mixed sign) and arc lengths $\left\{l_i\right\}$. Let the turning angles be $\Delta \theta_i=\kappa_i l_i$, and define the cumulative angles
$$
S_0=0, \quad S_i=\sum_{j=1}^i \Delta \theta_j .
$$
The displacement vector of each arc in the complex plane $z=x+i y$ is
$$
v_i\left(\Delta \theta_i\right)= \begin{cases}\frac{e^{i S_i}-e^{i S_{i-1}}}{i \kappa_i}=e^{i\left(S_{i-1}+\frac{\Delta \theta_i}{2}\right)} \frac{2 \sin \left(\Delta \theta_i / 2\right)}{\kappa_i}, & \kappa_i \neq 0, \\ e^{i S_{i-1}} l_i, & \kappa_i=0 .\end{cases}
$$
A smooth $C^1$ closed curve can be formed if and only if
$$
\sum_{i=1}^N \Delta \theta_i=2 \pi m, \quad \sum_{i=1}^N v_i=0
$$
where the first condition stems from the turning number $m$ and the second ensures positional closure.

We recall that the composition $c \in\{0,1\}$ is piecewise constant and the curvature takes only two values $\kappa_{+}, \kappa_{-}$ on the corresponding dense and dilute phases. Meanwhile the total arc-lengths of the two phases is fixed by the mass constraint:
$$
L_{+}>0, \quad L_{-}>0\,, \quad L_{+}+L_{-}=L\, \quad L_+/L = C_0\,.
$$
Consider an alternating decomposition of the loop into $N$ circular arcs with curvatures $\kappa_i \in\left\{\kappa_{+}, \kappa_{-}\right\}$ (alternating signs/phases), lengths $l_i$, and turning angles $\Delta \theta_i=\kappa_i l_i$, we have the following results:

\begin{proposition}
    For the domain number $N=2$, closure is not generic.
\end{proposition}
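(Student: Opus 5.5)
With $N=2$ there are exactly two arcs: one dense arc of length $l_1=L_+=C_0$ (taking $L$ normalized as in the mass constraint) and curvature $\kappa_+=\kappa_0+\lambda$, and one dilute arc of length $l_2=L_-$ and curvature $\kappa_-=\lambda$. All these data are fixed by the global constraints, so the arc lengths and curvatures leave no free parameters. The plan is to write out the two closure conditions $\sum_i\Delta\theta_i=2\pi m$ and $v_1+v_2=0$ explicitly and show that, given the turning condition, positional closure imposes an extra scalar equation on $(\kappa_0,C_0,L,m)$. That equation defines a codimension-one set in parameter space, which is what ``not generic'' means.

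\textbf{Step 1.} Turning is automatic. Proposition~\ref{prop: optimal_kappa} chose $\lambda$ precisely so that $\kappa_+L_++\kappa_-L_-=K_0=2\pi m$. So $\Delta\theta_1+\Delta\theta_2=2\pi m$, and hence $S_2=2\pi m$, $e^{iS_2}=e^{iS_0}=1$.

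\textbf{Step 2.} Positional closure. In the case where both curvatures are nonzero, the displacement formula gives
\[
v_1+v_2=\frac{e^{i\Delta\theta_1}-1}{i\kappa_+}+\frac{e^{iS_2}-e^{i\Delta\theta_1}}{i\kappa_-}
=\frac{e^{i\Delta\theta_1}-1}{i}\Big(\frac{1}{\kappa_+}-\frac{1}{\kappa_-}\Big).
\]
The last equality uses Step 1. Since $\kappa_+-\kappa_-=\kappa_0\neq0$, the bracket is nonzero. Closure therefore forces $e^{i\Delta\theta_1}=1$, that is, $\kappa_+L_+\in2\pi\mathbb{Z}$. Geometrically, each arc must be a full circle (or a union of full circles), and the chord $v_1$ cannot be cancelled by $v_2$ otherwise.

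\textbf{Step 3.} Degenerate cases. If $\kappa_-=0$, then $v_2=e^{i\Delta\theta_1}L_-$, while $|v_1|\le 2/|\kappa_+|$. Combined with $\Delta\theta_1=2\pi m$, closure requires $v_1=0$ and $L_-=0$, which is excluded by $L_->0$. If $\kappa_+=0$, the argument is symmetric.

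\textbf{Conclusion and main obstacle.} Closure holds only on the set where $(\kappa_0+\lambda)C_0\in 2\pi\mathbb{Z}$, with $\lambda=(2\pi m-\kappa_0C_0)/L$. This is a nontrivial analytic equation in $(\kappa_0,C_0)$ and so cuts out a measure-zero, codimension-one subset; generic parameters fail it. The hard part is making ``generic'' precise. I would show that the function $(\kappa_0,C_0)\mapsto(\kappa_0+\lambda)C_0$ has nonvanishing gradient. Its $\kappa_0$-derivative is $C_0(1-C_0/L)=C_0L_-/L>0$, so each level set is a smooth curve in the parameter plane. I would also note that even on this set the solution consists of multiply covered circles glued at a point, so it is geometrically degenerate. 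This is consistent with the acorn shapes in Fig.~\ref{fig: metastables}, which have to deviate from circular arcs and pay extra bending energy.
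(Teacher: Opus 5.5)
Your argument is correct and follows the paper's proof. Both use the turning condition to set $e^{iS_2}=1$ and reduce positional closure to $(1/\kappa_+-1/\kappa_-)(e^{i\Delta\theta_1}-1)=0$, which forces $\Delta\theta_1\in 2\pi\mathbb{Z}$. Your additions are sound refinements of that same route: you exclude the paper's $\kappa_1=\kappa_2$ branch using $\kappa_+-\kappa_-=\kappa_0\neq 0$, you treat the straight-arc ($\kappa_\pm=0$) cases that the paper's displacement formula silently omits, and you make ``non-generic'' precise through the nonvanishing $\kappa_0$-derivative of $(\kappa_0+\lambda)C_0$.
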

\begin{proof}
    For $N=2$ the closure conditions read
$$
\Delta \theta_1+\Delta \theta_2=2 \pi m, \quad v_1+v_2=0 .
$$
Writing $v_i=\left(e^{i S_i}-e^{i S_{i-1}}\right) /\left(i \kappa_i\right)$ with $S_1=\Delta \theta_1, S_2=\Delta \theta_1+\Delta \theta_2$, the second condition reduces to

$$
\frac{e^{i \Delta \theta_1}-1}{i \kappa_1}+\frac{e^{i 2 \pi m}-e^{i \Delta \theta_1}}{i \kappa_2}=0 \Longleftrightarrow\left(\frac{1}{\kappa_1}-\frac{1}{\kappa_2}\right)\left(e^{i \Delta \theta_1}-1\right)=0 .
$$
Hence either $\kappa_1=\kappa_2$ and then $\Delta \theta_1+\Delta \theta_2=2 \pi m$ (a degenerate "two-arc circle" compatibility), or $e^{i \Delta \theta_1}=1$ (so $\Delta \theta_1 \in 2 \pi \mathbb{Z}$ ) forcing $\Delta \theta_2 \in 2 \pi \mathbb{Z}$. These are non-generic algebraic coincidences depending on $\kappa_i$ and $l_i$. In particular, with distinct $\kappa_1 \neq \kappa_2$ and prescribed $L_{ \pm}$(hence prescribed $\Delta \theta_i$ up to partition), closure typically fails. Thus $N=2$ does not generally admit a solution.
\end{proof}
The previous result shows that with only two arcs closure of the filament requires exceptional algebraic coincidences. That is, for generic prescribed $L_{ \pm}$ and distinct curvatures $\kappa_{+} \neq \kappa_{-}$, closure typically fails. When $N$ increases, additional geometric degrees of freedom become available. In particular, for even $N \geq 4$ one can enforce closure by equally splitting each phase so that the displacement vectors within each phase form a regular polygon and hence sum to zero. We state this construction next.

\begin{proposition}
    For the domain number $N \ge 4$, an equal-angle/length construction gives closure.  
\end{proposition}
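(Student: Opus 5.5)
We argue constructively, as the remark before the statement suggests. Write $N=2k$ with $k\ge 2$ and take the arcs to alternate between the two phases, $\kappa_{2j-1}=\kappa_+$ and $\kappa_{2j}=\kappa_-$. Split each phase equally, setting $l_{2j-1}=L_+/k$ and $l_{2j}=L_-/k$ for $j=1,\dots,k$. Every dense arc then turns by $\Delta\theta_+=\kappa_+L_+/k$ and every dilute arc by $\Delta\theta_-=\kappa_-L_-/k$. By Proposition~\ref{prop: optimal_kappa},
\[
\kappa_+L_++\kappa_-L_-=\lambda L+\kappa_0C_0=K_0=2\pi m .
\]
The turning condition $\sum_i\Delta\theta_i=2\pi m$ therefore holds automatically, and one dense--dilute pair turns by $\phi:=\Delta\theta_++\Delta\theta_-=2\pi m/k$.

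For closure we use the displacement formula $v_i$ stated above. Consecutive dense arcs have cumulative starting angles that differ by exactly $\phi$, so
\[
v_{2j-1}=e^{i(j-1)\phi}\,v_1,\qquad v_{2j}=e^{i(j-1)\phi}\,v_2 ,
\]
where the formula holds whether or not $\kappa_\pm$ vanishes. Hence
\[
\sum_{i=1}^{N}v_i=(v_1+v_2)\sum_{j=0}^{k-1}e^{ij\phi}.
\]
The geometric sum equals $\big(e^{ik\phi}-1\big)/\big(e^{i\phi}-1\big)$, and its numerator vanishes because $k\phi=2\pi m$. The sum is therefore zero whenever $e^{i\phi}\neq1$, that is, whenever $k\nmid m$. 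For $m=1$ this is every $k\ge2$. The picture is that the pair displacements $v_1+v_2$ are rotated copies forming a closed regular $k$-gon, traversed $m/\gcd$ times. Tangent continuity at every junction holds by construction, since $\theta$ is defined as the integral of $\kappa$, consistent with Proposition~\ref{prop:no_kink}. The result is a $C^1$ closed curve with the required composition, and it attains the lower bound $\tfrac{L}{2}\lambda^2+\gamma N$.

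The main obstacle is the degenerate case $e^{i\phi}=1$, i.e.\ $k\mid m$; for $m=1$ this occurs only for $k=1$, which is precisely the $N=2$ obstruction of the previous proposition. If $m\ge2$ and $k\mid m$, I would recover closure by abandoning the equal split. One option is to choose $k$ not dividing $m$, which is always possible for some $N\ge4$. Alternatively, perturb the lengths $l_i$ within each phase at fixed $L_\pm$: the remaining $N-4$ free parameters (together with the two relations beyond the constraints) can be used in an implicit-function argument to reach $\sum v_i=0$. I would check the Jacobian rank at the symmetric configuration.

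A secondary point, to be noted rather than proved, is that nothing here forbids self-intersection. The construction gives a closed immersed curve, not necessarily an embedded one, consistent with the figure-8 shapes observed in the main text.
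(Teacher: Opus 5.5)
Your argument is the paper's construction: equal splitting of each phase, the turning constraint giving a rotation step $\phi=2\pi m/k$ between consecutive same-phase arcs, and the rotated copies summing to zero (your factorization $(v_1+v_2)\sum_{j}e^{ij\phi}$ is equivalent to the paper's two regular $M$-gons). Your caveat that this needs $e^{i\phi}\neq 1$, i.e.\ $k\nmid m$, is correct and is passed over silently in the paper, harmlessly so in its setting $m=1$. The one remark to drop is the implicit-function fix for $k\mid m$: there the closure map equals $k(v_1+v_2)\neq 0$ at the equal split, and the implicit function theorem only perturbs from an existing zero, so it cannot produce a closed configuration from that point.
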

\begin{proof}
Let $N=2M$ with $M\ge2$ and alternate curvatures $\kappa_+,\kappa_-,\dots$.
Split each phase equally:
\[
l^{(+)}=\frac{L_+}{M},\quad \Delta\theta^{(+)}=\frac{\kappa_+L_+}{M},\qquad
l^{(-)}=\frac{L_-}{M},\quad \Delta\theta^{(-)}=\frac{\kappa_-L_-}{M}.
\]
Then the turning constraint $\sum_{i=1}^N\Delta\theta_i=2\pi m$ gives
\[
\Delta\theta^{(+)}+\Delta\theta^{(-)}=\frac{2\pi m}{M}.
\]
For the $M$ ``$+$'' arcs, all displacement vectors $v_i$ have the same modulus (since $\Delta\theta^{(+)}$ is fixed),
and the argument increases by the constant step $\Delta\theta^{(+)}+\Delta\theta^{(-)}=2\pi m/M$ from one $+$-arc to the next.
Hence $\{v_i\}_{+\text{ phase}}$ forms a rotated, scaled regular $M$-gon, so
\[
\sum_{+\text{ phase}} v_i=0.
\]
The same argument yields $\sum_{-\text{ phase}} v_i=0$, and therefore $\sum_{i=1}^N v_i=0$.
Together with Proposition~\ref{prop:no_kink}, the constructed loop is a $C^1$ closed curve.
\end{proof}

Importantly, the equal-split construction not only proves existence of closed loops for even $N \geq 4$, but also provides a symmetric reference configuration $\mathbf{\Delta}^{\mathrm{eq}}$. A natural next question is whether this reference configuration is an isolated solution or has finite codimension.  We answer this by introducing the closure map and applying the implicit function theorem.
\begin{proposition}[degeneracy or uniqueness of closed minimizer solutions]\label{prop:local_manifold_closure}
Let $\boldsymbol{\Delta}=(\Delta\theta_1,\dots,\Delta\theta_N)$ satisfy the two linear constraints of turning number and mass conservation (hence the admissible set has dimension $d=N-2$).  
Define the closure map
\[
\Phi(\boldsymbol{\Delta})=\sum_{i=1}^N v_i(\boldsymbol{\Delta})\in\mathbb{R}^2.
\]
Assume that at the equal-split configuration $\boldsymbol{\Delta}^{\mathrm{eq}}$ one has $\Phi(\boldsymbol{\Delta}^{\mathrm{eq}})=0$ and that the Jacobian
$D\Phi(\boldsymbol{\Delta}^{\mathrm{eq}})$ has (generic) rank $2$. Then, in a neighborhood of
$\boldsymbol{\Delta}^{\mathrm{eq}}$ the set of closed configurations
\[
\Phi^{-1}(0)=\{\boldsymbol{\Delta}:\Phi(\boldsymbol{\Delta})=0\}
\]
is a smooth manifold of codimension $2$ in the admissible set, hence
\[
\dim \Phi^{-1}(0)=d-2=N-4.
\]
In particular, the equal-split closure is locally isolated for $N=4$, while for even $N>4$ there is a $(N-4)$-dimensional family of nearby (generically non-equal-split) closed minimizer configurations.
\end{proposition}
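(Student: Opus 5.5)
The plan is a direct application of the implicit function theorem (regular value theorem) to the closure map $\Phi$, restricted to the admissible affine set. First I will make the admissible set precise. With $\kappa_\pm$ fixed by Proposition~\ref{prop: optimal_kappa} and arcs alternating in phase, each $\Delta\theta_i$ determines $l_i=\Delta\theta_i/\kappa_i$. The constraints are then
\[
\sum_i\Delta\theta_i=2\pi m,\qquad \sum_{i\,\mathrm{odd}}\Delta\theta_i/\kappa_+=L_+ .
\]
Equivalently, one can use the partition constraints $\sum_{+}\Delta\theta_i=\kappa_+L_+$ and $\sum_{-}\Delta\theta_i=\kappa_-L_-$, which imply the turning constraint. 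These are two independent linear equations provided $\kappa_+\neq\kappa_-$, which holds since $\kappa_0\neq 0$. The admissible set $\mathcal A$ is therefore an affine subspace of $\mathbb R^N$ of dimension $d=N-2$. If $\kappa_-=0$, I will parametrize that phase by $l_i$ instead of $\Delta\theta_i$; the argument is unchanged. I will then check that $\Phi$ is smooth (indeed real-analytic) on $\mathcal A$. This follows from the formula $v_i=e^{i(S_{i-1}+\Delta\theta_i/2)}\,2\sin(\Delta\theta_i/2)/\kappa_i$, where the $S_i$ are linear in $\boldsymbol\Delta$.

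Second, I will identify $\mathbb C\cong\mathbb R^2$ and regard $\Phi|_{\mathcal A}:\mathcal A\to\mathbb R^2$ as a smooth map between manifolds of dimensions $d$ and $2$. By the preceding Proposition, $\Phi(\boldsymbol\Delta^{\mathrm{eq}})=0$. By hypothesis, the differential of $\Phi$ restricted to the tangent space $T\mathcal A$ (the kernel of the two linear constraints) has rank $2$ there. Since rank is lower semicontinuous, it remains $2$ on a neighbourhood $U$. The implicit function theorem then gives that $\Phi^{-1}(0)\cap U$ is a smooth embedded submanifold of $\mathcal A$ of codimension $2$, i.e.\ of dimension $d-2=N-4$. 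Concretely, I will choose coordinates on $\mathcal A$ in which two of them form a block where $D\Phi$ is invertible, and solve for those two as smooth functions of the remaining $N-4$.

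Third, I will read off the two cases. For $N=4$ the manifold is $0$-dimensional, so $\boldsymbol\Delta^{\mathrm{eq}}$ is an isolated point of $\Phi^{-1}(0)$. For even $N\ge6$ it is a positive-dimensional family. Every member has unequal splits except on a lower-dimensional subset, since the equal-split point is a single point. Each member is still a minimizer of the closure-free problem, because it has the curvatures $\kappa_\pm$ and the correct $L_\pm$, so its energy is the value in \eqref{eq: energy_sharp}. Each is also $C^1$ closed by Proposition~\ref{prop:no_kink}.

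The main obstacle is the rank hypothesis, which the statement assumes rather than proves. I would at least sanity-check it at $\boldsymbol\Delta^{\mathrm{eq}}$. First compute $\partial v_j/\partial\Delta\theta_i$: for $j>i$ a variation of $\Delta\theta_i$ rotates $v_j$ by $i v_j$, and it also rescales and rotates $v_i$ itself. Then sum over an admissible direction, for example moving turning from one $+$-arc to another $+$-arc. The regular-polygon structure should make these images span $\mathbb C$ generically, failing only when $\sin(\Delta\theta^{(\pm)}/2)=0$ or in similar degenerate coincidences.
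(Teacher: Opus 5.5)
Your proposal is correct and follows the same route as the paper. Both arguments note that $\Phi(\boldsymbol\Delta^{\mathrm{eq}})=0$, observe that the two linear constraints leave an $(N-2)$-dimensional admissible set, and apply the implicit function theorem under the assumed rank-$2$ condition to get an $(N-4)$-dimensional local solution set. Your version is more careful in two places: you make the admissible affine set explicit, and you state the rank condition for $D\Phi$ restricted to the tangent space of that set, which is what the conclusion actually needs.
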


\begin{proof}
For $N=4$, the equal-split arrangement yields pair cancellation within each phase:
$v_1+v_3=0$ and $v_2+v_4=0$, hence $\Phi(\boldsymbol{\Delta}^{\mathrm{eq}})=0$.
For $N=6$, the three vectors in each phase are separated by $120^\circ$ and therefore sum to zero within each phase, again giving $\Phi(\boldsymbol{\Delta}^{\mathrm{eq}})=0$.

Imposing the two linear constraints leaves $d=N-2$ independent parameters. Since
$D\Phi(\boldsymbol{\Delta}^{\mathrm{eq}})$ has generic rank $2$, the implicit function theorem implies that
$\Phi^{-1}(0)$ is a smooth submanifold of codimension $2$ in the admissible set. Therefore
$\dim \Phi^{-1}(0)=d-2=N-4$, which yields the stated consequences for $N=4$ and even $N>4$.
\end{proof}

\section{Additional details on numerical methods and parameters} \label{secapp: numerical}

\subsection{Phase diagram parameter sweep details}
For each parameter point $\left(x_c, y_c\right)$, in the parameter plane of interest (for example ($C,\alpha$)) we consider its $3 \times 3$ neighborhood in the discrete parameter grid. For every existing neighbor $\left(x_n, y_n\right)$ and for every metastable solution previously identified at that neighbor, we extract the final Fourier coefficients $\hat{u}_n(k)$ of the found morphology and use them to initialize a new simulation at the fixed center parameters $(x_c, y_c)$. This process is intended to track every metastable solution that is supported at $(x_c, y_c)$ by initializing simulations in each neighbouring metastable basin to see which basin of attraction they then end up in. It is also useful in overcoming any free energy barriers between structures that are classified within the same morphological phase ({\em i.e.}, with the same $N$ value).

Before each such extra minimization from neighboring structures, we enforce the relevant global constraints in spectral space: the zero mode is adjusted to match the prescribed total $C$, and the curvature-related spectral contribution is projected or rescaled to preserve the total curvature, while higher modes retain the neighbor's structure. The free energy is then minimized at $\left(x_c, y_c\right)$ from this constraint-compliant initialization. This procedure is iteratively repeated to construct updated phase diagrams, until no new morphological phases or transitions are detected. The ground state at each parameter is then identified, as the locally stable branch with the lowest free energy.

\begin{figure*}[t]
\centering
\includegraphics[width=\linewidth]{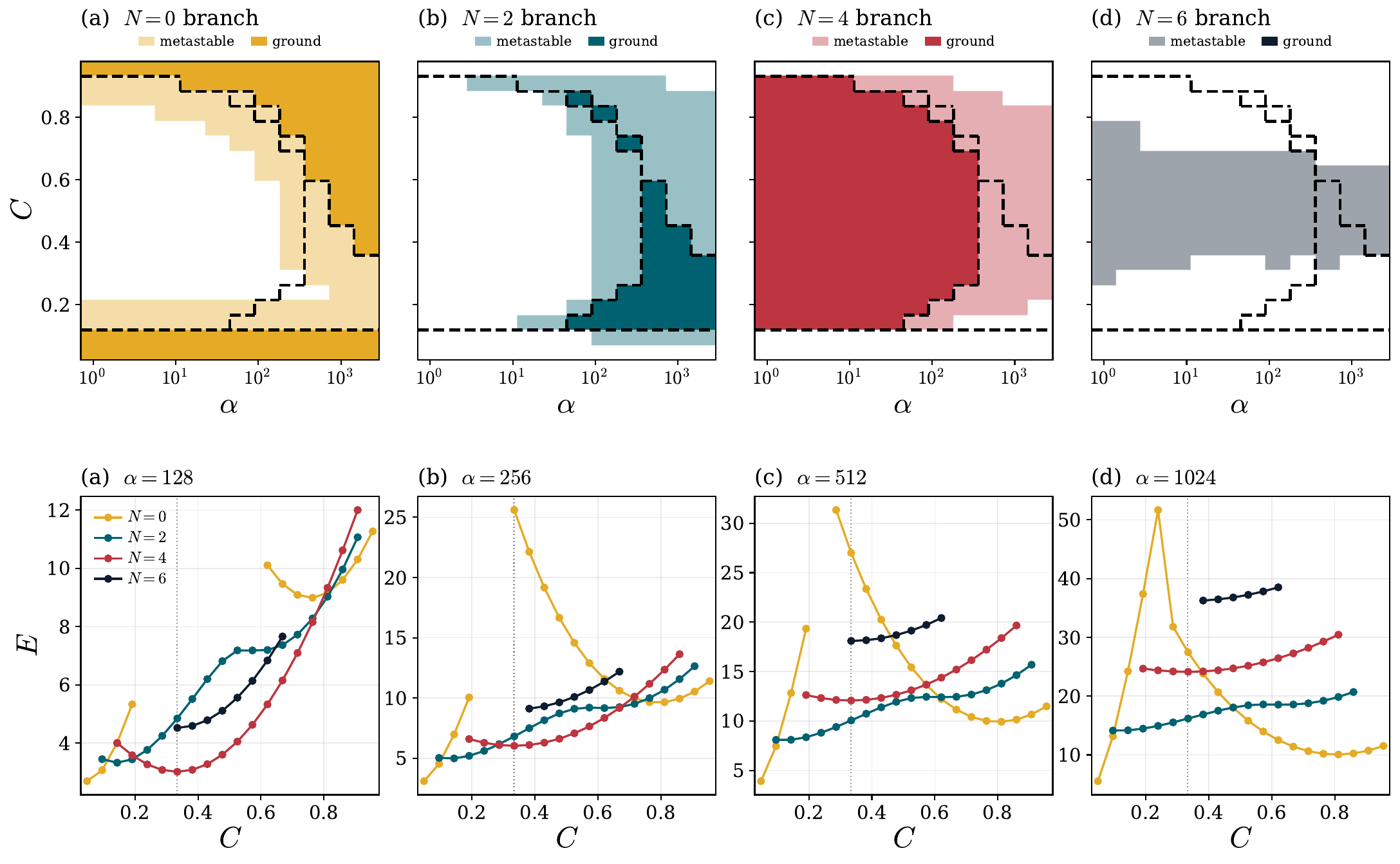}
\caption{
Metastable branch landscape on the ( $\alpha, C$ ) plane (figure \ref{fig:phase_ep05}) and energy of metastables at fixed $\alpha$, for $\varepsilon=0.05, \kappa_0=3, \beta=20, m=1$.
Top row [(a)-(d)] : branch-existence maps for interface number $N=0,2,4,6$. Within each panel, light colour marks parameter values at which the branch is a local minimum lying above the ground state, while the darker colour marks values at which the branch is itself the ground state. The dashed black lines is the phase boundary as in figure \ref{fig:energy_decomp}. The $N=6$ branch shown in (d) is widespread as a metastable basin yet is the ground state nowhere.
Bottom row [(a)-(d)]: 1D energy cuts at fixed $\alpha=128,256,512,1024$. Each curve plots the converged energy $E(C)$ of the interface number $N=0,2,4,6$. Markers are placed only at $C$ values at which that branch is realised as a local minimum. The vertical dotted line marks $C^{\star}=1 / \kappa_0 \approx 0.333$, the composition at which the curvature-composition mismatch of the dense phase is minimised as discussed in figure \ref{fig:energy_decomp}.
}
\label{fig:metastable_phasediagram}
\end{figure*}

\subsection{Gradient flow simulation: Details and invariants}
Nonlinear products in \eqref{eq:evolution_gkc}--\eqref{eq:vt_vn} induce aliasing in Fourier pseudo-spectral discretizations. We apply $3/2$-rule de-aliasing for nonlinear evaluations, together with a two-stage low-pass filtering of the normal velocity: we first use a 32 shape low-pass filter on the thermodynamic stress $\mathcal P$ \eqref{eq: P_stress}, which is empirically the most contamination-prone term. Then we apply a light 64-shape low pass filter to the assembled $v_n$ before advancing \eqref{eq:evolution_gkc}. Appropriate value of the filter can be tested on metastable states. 

We show that the three global constraints are exact invariants of the continuous gradient-flow dynamics~\eqref{eq:evolution_gkc}.\\

{\bf Turning number:}\label{sec:turning_cons}

For the turning number constraint \eqref{eq: turning}, using \eqref{eq:evolution_gkc} we have  
\begin{equation}
    \frac{d}{d t} \int_{S^1} \kappa h d \sigma=\int_0^{2 \pi} \partial_\sigma[\partial_s v_n+\kappa v_t] d \sigma=0
\end{equation}
In the Fourier pseudo-spectral scheme, $\partial_\sigma$ corresponds to multiplication by~$ik$; hence $\widehat{(\partial_\sigma f)}_0=0$ for any discrete field, and the trapezoidal rule is exact for band-limited functions. Hence the error comes with the time integration method.\\

{\bf Total mass:}\label{sec:mass_cons}

For the total mass constraint, from the evolution equations,
\begin{equation}
    \frac{d}{d t} \int_{S^1} ch d \sigma=\int_0^{2 \pi}\left(c_t h+c h_t\right) d \sigma=\int_0^{2 \pi} M \partial_\sigma\left(h^{-1} \partial_\sigma \mu\right) d \sigma=0
\end{equation}
and the same Fourier argument gives exact spatial conservation. The residual comes from the time integration method.\\

{\bf Closure:}\label{sec:closure_cons}
Define the closure vector $\mathbf{T}=(\cos\theta,\sin\theta)$ and $\mathbf{N}=(-\sin\theta,\cos\theta)$.  We have the time evolution
\begin{equation}
\frac{d}{dt}\int_{S^1}\mathbf{T}\,h\,d\sigma
=\int_{S^1}\bigl[(\omega(t)+\partial_s v_n+\kappa v_t)\,\mathbf{N}\,h
+(\partial_s v_t-\kappa v_n)\,\mathbf{T}\,h\bigr]\,d\sigma
=\int_{\Gamma}\left[\omega(t) \mathbf{N}+\partial_s\left(v_t \mathbf{T}+v_n \mathbf{N}\right)\right] d s
\label{eq:dXdt_raw}
\end{equation}
where $\omega$ is a rigid body uniform rotation rate~$\omega(t)$.  Since $
\int_{\Gamma} \partial_s(\cdot) d s=0
$
for any smooth field on a closed curve, the evolution of the closure integral can only arise from the spatially uniform rotation term. Consequently, if$
\int_{S^1} \mathbf{T} h d \sigma=0$ at $ t=0
$ then it remains zero for all $t \geq 0$.\\

{\bf Numerical error:}
At the continuous level, all three constraints are exact invariants. In the pseudo-spectral discretization, small residual violations arise from time discretization, spectral truncation of nonlinear terms, and the splitting of coupled variables. All contributions can be systematically reduced by using more advanced integration methods, decreasing $\Delta t$ or increasing the number of Fourier modes $N$, but this is not pursued here. In all simulations reported in this paper, the residuals of the turning number, total mass, and closure constraints remain below $10^{-4}$.

\subsection{Parameters values for figures}

\paragraph{Figure \ref{fig: metastables}}
$\alpha=1024, \beta=20, \varepsilon=0.05, \kappa_0=3, C=0.43, m=1$
spatial resolution $N_x = 256$, Fourier mode $N_{\rm mode}=64$.

\paragraph{Figure \ref{fig:closure_compare_combined}}
Both left and right panels use $\beta=20$, $\varepsilon=0.05$, $\kappa_0=3$, $m=1$, with spatial resolution $N_x=256$ and Fourier modes $N_{\rm mode}=64$.
\textbf{Left}: $\alpha=64$, $C=0.5$.
\textbf{Right}: $\alpha=1024$, $C=0.24$.

\paragraph{Figures  \ref{fig:phase_diagrams_alphaC},  \ref{fig:energy_decomp},  \ref{fig:phase_diagrams_k0C}}
All figures use $\beta=20, \varepsilon=0.05, \kappa_0=3, m=1$ and 
spatial resolution $N_x = 128$, Fourier mode $N_{\rm mode}=32$.
Optimization stop criteria: tolerance for termination by the norm of the Lagrangian gradient $10^{-6}$.
For the figure  \ref{fig:phase_diagrams_alphaC}, $C = C_{\mathrm{total}}/(2\pi)$, with $C_{\mathrm{total}}$ uniformly sampled from $0.3$ to $6.0$ in steps of $0.3$.

\paragraph{Figure \ref{fig:PDE_solution}}
Both left and right panels use $\beta=20$, $\varepsilon=0.05$, $\kappa_0=3$, $m=1$, motility $M=1$, spatial resolution $N_x=128$, and time step $\Delta t=10^{-7}$.
\textbf{Left}: $\alpha=64$, $C=0.43$.
\textbf{Right}: $\alpha=256$, $C=0.55$.

\paragraph{Figure \ref{fig:metastable_phasediagram}}
A small Gaussian perturbation of standard deviation $0.01$ is applied to the non-zero $c$-field Fourier modes of each warm-start initialization to break residual symmetry, so that the metastable branches identified are those recovered as local minima after this perturbation.

\end{document}

%% file: figures/curve_mapping.tex

\pgfmathsetmacro{\Rbase}{1.5} 
\pgfmathsetmacro{\aone}{0.24}
\pgfmathsetmacro{\atwo}{0.16}
\pgfmathsetmacro{\phione}{5}
\pgfmathsetmacro{\phitwo}{80}

\pgfmathsetmacro{\angTN}{210}
\pgfmathsetmacro{\rTN}{ \Rbase + \aone*cos(2*\angTN + \phione) + \atwo*sin(3*\angTN + \phitwo) }
\pgfmathsetmacro{\rTNp}{ -\aone*2*sin(2*\angTN + \phione) + \atwo*3*cos(3*\angTN + \phitwo) }
\coordinate (P) at ({\rTN*cos(\angTN)}, {\rTN*sin(\angTN)});
\pgfmathsetmacro{\dxTN}{ \rTNp*cos(\angTN) - \rTN*sin(\angTN) }
\pgfmathsetmacro{\dyTN}{ \rTNp*sin(\angTN) + \rTN*cos(\angTN) }
\pgfmathsetmacro{\lenTN}{ sqrt(\dxTN*\dxTN + \dyTN*\dyTN) }
\pgfmathsetmacro{\Tx}{ \dxTN/\lenTN }
\pgfmathsetmacro{\Ty}{ \dyTN/\lenTN }
\pgfmathsetmacro{\Nx}{ -\Ty }
\pgfmathsetmacro{\Ny}{  \Tx }

\pgfmathsetmacro{\angS}{150}
\pgfmathsetmacro{\wS}{20}
\pgfmathsetmacro{\angStart}{\angS - \wS/2}
\pgfmathsetmacro{\angEnd}{\angS + \wS/2}
\pgfmathsetmacro{\rMid}{ \Rbase + \aone*cos(2*\angS + \phione) + \atwo*sin(3*\angS + \phitwo) }
\coordinate (M) at ({\rMid*cos(\angS)}, {\rMid*sin(\angS)});
\coordinate (Mds) at ({\rMid*cos(\angS)-0.2}, {\rMid*sin(\angS)+0.1});
\pgfmathsetmacro{\rStart}{ \Rbase + \aone*cos(2*\angStart + \phione) + \atwo*sin(3*\angStart + \phitwo) }
\pgfmathsetmacro{\rEnd}{   \Rbase + \aone*cos(2*\angEnd   + \phione) + \atwo*sin(3*\angEnd   + \phitwo) }
\coordinate (A) at ({\rStart*cos(\angStart)}, {\rStart*sin(\angStart)});
\coordinate (B) at ({\rEnd*cos(\angEnd)},     {\rEnd*sin(\angEnd)});
\pgfmathsetmacro{\rSpA}{ -\aone*2*sin(2*\angStart + \phione) + \atwo*3*cos(3*\angStart + \phitwo) }
\pgfmathsetmacro{\dxA}{ \rSpA*cos(\angStart) - \rStart*sin(\angStart) }
\pgfmathsetmacro{\dyA}{ \rSpA*sin(\angStart) + \rStart*cos(\angStart) }
\pgfmathsetmacro{\lenA}{ sqrt(\dxA*\dxA + \dyA*\dyA) }
\pgfmathsetmacro{\nAx}{ -\dyA/\lenA }
\pgfmathsetmacro{\nAy}{  \dxA/\lenA }
\pgfmathsetmacro{\sA}{ (\nAy<0) ? -1 : 1 }
\pgfmathsetmacro{\NxA}{ \sA * \nAx }
\pgfmathsetmacro{\NyA}{ \sA * \nAy }
\pgfmathsetmacro{\rSpB}{ -\aone*2*sin(2*\angEnd + \phione) + \atwo*3*cos(3*\angEnd + \phitwo) }
\pgfmathsetmacro{\dxB}{ \rSpB*cos(\angEnd) - \rEnd*sin(\angEnd) }
\pgfmathsetmacro{\dyB}{ \rSpB*sin(\angEnd) + \rEnd*cos(\angEnd) }
\pgfmathsetmacro{\lenB}{ sqrt(\dxB*\dxB + \dyB*\dyB) }
\pgfmathsetmacro{\nBx}{ -\dyB/\lenB }
\pgfmathsetmacro{\nBy}{  \dxB/\lenB }
\pgfmathsetmacro{\sB}{ (\nBy<0) ? -1 : 1 }
\pgfmathsetmacro{\NxB}{ \sB * \nBx }
\pgfmathsetmacro{\NyB}{ \sB * \nBy }

\pgfmathsetmacro{\cx}{-5.6}
\pgfmathsetmacro{\cy}{0}
\pgfmathsetmacro{\rad}{1.2}
\coordinate (C) at ({\cx},{\cy});
\pgfmathsetmacro{\sigA}{80}
\pgfmathsetmacro{\dsig}{20}
\pgfmathsetmacro{\sigB}{\sigA + \dsig}
\pgfmathsetmacro{\sigM}{\sigA + \dsig/2}
\coordinate (Sstart) at ({\cx + \rad*cos(\sigA)}, {\cy + \rad*sin(\sigA)});
\coordinate (Send)   at ({\cx + \rad*cos(\sigB)}, {\cy + \rad*sin(\sigB)});
\coordinate (Smid)   at ({\cx + \rad*cos(\sigM)}, {\cy + \rad*sin(\sigM)});
\pgfmathsetmacro{\sigZero}{310}
\coordinate (Szero) at ({\cx + \rad*cos(\sigZero)}, {\cy + \rad*sin(\sigZero)});

\pgfmathsetmacro{\tick}{0.22}
\pgfmathsetmacro{\rtick}{0.18}


\draw[line width=1pt, domain=0:360, samples=400, smooth, variable=\ang]
  plot ({ ( \Rbase + \aone*cos(2*\ang + \phione) + \atwo*sin(3*\ang + \phitwo) ) * cos(\ang) },
        { ( \Rbase + \aone*cos(2*\ang + \phione) + \atwo*sin(3*\ang + \phitwo) ) * sin(\ang) });
\node[anchor=north] at ($ (0.2,-\rad-0.3) $) {$\mathbf{x} \in \Gamma$};
\def\boxxmin{-2.7}
\def\boxxmax{2.6}
\def\boxymin{-2.2}
\def\boxymax{2.2}

\draw[dashed, thick] (\boxxmin,\boxymin) rectangle (\boxxmax,\boxymax);

\node[anchor=south east] at (\boxxmax, \boxymin) {$\mathbb{R}^2$};

\fill (P) circle (1.2pt);
\tikzset{vec/.style={->, line width=1pt}}
\draw[vec] (P) -- ++({1.2*\Tx},{1.2*\Ty}) node[left=2pt] {$\mathbf T$};
\draw[vec] (P) -- ++({1.2*\Nx},{1.2*\Ny}) node[below=2pt]  {$\mathbf N$};
\node[anchor=south east] at ($(P)+(0.1, -0.6)$) {$\mathbf{x}(\sigma_0, t)$};

\draw[line width=1.8pt, domain=\angStart:\angEnd, samples=120, smooth, variable=\ang]
  plot ({ ( \Rbase + \aone*cos(2*\ang + \phione) + \atwo*sin(3*\ang + \phitwo) ) * cos(\ang) },
        { ( \Rbase + \aone*cos(2*\ang + \phione) + \atwo*sin(3*\ang + \phitwo) ) * sin(\ang) });
\draw[line width=1pt] (A) -- ++({\tick*\NxA},{\tick*\NyA});
\draw[line width=1pt] (B) -- ++({\tick*\NxB},{\tick*\NyB});
\node[anchor=south, yshift=3pt] at (Mds) {$ds$};

\draw[line width=1pt] (C) circle (\rad);
\node[anchor=north] at ($ (C) + (0,-\rad-0.25) $) {$\sigma\in S^1$};
\draw[line width=1.6pt] (Sstart) arc[start angle=\sigA, end angle=\sigB, radius=\rad];
\draw[line width=1pt] (Sstart) -- ++({\rtick*cos(\sigA)},{\rtick*sin(\sigA)});
\draw[line width=1pt] (Send)   -- ++({\rtick*cos(\sigB)},{\rtick*sin(\sigB)});
\node[anchor=center, yshift=3pt] at ({\cx + (\rad+0.15)*cos(\sigM)}, {\cy + (\rad+0.15)*sin(\sigM)}) {$d\sigma$};

\fill (Szero) circle (1.1pt) node[below=2pt] {$\sigma_0$};
\draw[densely dashed, ->, line width=0.9pt] (Szero) -- (P);

\draw[densely dashed, ->, line width=0.9pt] (Smid) -- (M)
  node[midway, above=3pt] {$ds = h\, d\sigma$};